\documentclass{article}
\usepackage[margin=1in]{geometry}
\usepackage[utf8]{inputenc}

\usepackage{amsmath}
\usepackage{amsfonts}
\usepackage{amssymb}
\usepackage{amsthm}
\usepackage{mathtools}
\usepackage{graphicx}
\usepackage{float}
\usepackage{tabularx}
\usepackage{color}
\usepackage{transparent}
\usepackage{algorithm}
\usepackage{algorithmic}
\usepackage{subcaption}


\renewcommand{\vec}{\mathbf}

\renewcommand{\bar}{\overline}
\renewcommand{\hat}{\widehat}

\DeclareMathOperator*{\argminbelow}{arg\,min}

\newcommand{\ubar}{\underline}
\newcommand{\bbR}{\mathbb{R}}

\newcommand{\calU}{\mathcal{U}}
\newcommand{\calK}{\mathcal{K}}

\newcommand{\indenti}[1]{\transparent{0}{#1}\transparent{1}}

\newtheorem{theorem}{Theorem}

\newtheorem{lemma}{Lemma}

\newtheorem{corollary}{Corollary}

\newtheorem{definition}{Definition}
\newtheorem{remark}{Remark}
\newtheorem{conjecture}{Conjecture}

\newtheorem{problem}{Problem}


\title{Analytical Construction of CBF-Based Safety Filters for Simultaneous State and Input Constraints (Extended Version)}
\author{Peter A. Fisher and Anuradha M. Annaswamy
\thanks{This work was supported by the Boeing Strategic University Initiative and by the Air Force Research Laboratory.}
\thanks{The authors are with the Department of Mechanical Engineering, Massachusetts Institute of Technology, Cambridge, MA, 02139 USA}
\thanks{Corresponding author: P.A. Fisher, \ {\tt pafisher@mit.edu}}
}
\date{}

\begin{document}

\maketitle


\begin{abstract}
    We consider the problem of controlling a plant consisting of an $n$th-order chain of integrators under input saturation, with the goal of ensuring satisfaction of simultaneous upper and lower constraints on every state variable while accomplishing some nominal control objective as closely as possible. In contrast to prior work, we derive a filter on the control input for a single state constraint which is analytical for any $n \geq 1$, and develop a guaranteed method of tuning multiple simultaneous such filters to prevent conflicts and allow for simultaneous satisfaction of any number of state constraints. We also discuss an extension of our approach to the multi-input $n$th-order integrator setting.
\end{abstract}

\section{Introduction} \label{sec:introduction}



Control barrier functions (CBFs) \cite{ames2019CBF} have gained traction in recent years as a tool for guaranteeing enforcement of state constraints, which often arise due to safety considerations (for instance, ``don't hit that wall"). A state constraint can often be expressed as requiring the state vector to remain in a certain set for all time, and thus safety is in the CBF literature is synonymous with forward-invariance of a ``safe set." A myriad of potential applications have been considered, such as walking robots \cite{nguyen2016walkingCBFs}, self-driving cars \cite{gunter2022automatedvehicles,alan2023automatedvehicles}, and multi-agent systems \cite{pmlr-v229-zhang23h}.

Much of the earlier CBF literature considered problem settings where sufficient control authority was assumed, explicitly or implicitly, to exist for any input the safety filter might command.
However, some of the more recent literature has considered safety in the presence of input constraints.
One category of approach seeks to find a subset of the safe set which can be rendered invariant under input constraints using techniques such as Hamilton-Jacobi reachability and sum-of-squares programming  \cite{gurriet2018,mitchell2005,xu2018correctness}. Both of these techniques are computationally expensive and suffer from the curse of dimensionality. Another proposed approach is the use of a backup controller which is known to render a certain backup set forward-invariant under input limitations \cite{gurriet2020scalable,molnar2023safety}. The backup set approach is efficient, but can also be conservative and relies on MPC-style online optimization. Specialized methods of handling input constraints also exist for certain classes of dynamical systems, such as Euler-Lagrange systems \cite{cortez2020correct} or systems that evolve on a manifold \cite{wu2015safety}.

The concept of an input-constrained CBF (ICCBF) was introduced in \cite{agrawal2021ICCBFs} as a generalization of high-order CBFs (see \cite{xiao2019HOCBF}). The idea of an ICCBF is to iteratively select subsets of the original state constraints until one is left with a state constraint that is satisfiable under input limitations.
As the final solution allows one to accommodate state and input constraints simultaneously, ICCBFs present a useful framework; however, conditions under which the iterative process converges are somewhat unclear.

In \cite{doeser2020invariant}, the authors consider the problem of satisfying multiple state constraints simultaneously for a single-input chain of $n$ integrators under input saturation. Minimally-conservative analytical CBF filters are derived for the setting where $n \leq 4$ and are applied to control of a multirotor drone.
In this paper, we revisit this problem and, taking inspiration from ICCBFs in \cite{agrawal2021ICCBFs}, derive analytic CBF filters for any $n \geq 1$, thus alleviating the requirement in \cite{doeser2020invariant} that $n \leq 4$. We further discuss extensions of our approach to the multi-input setting and demonstrate its efficacy in simulation.

The problem of satisfying multiple state constraints simultaneously under input saturation was also studied in \cite{breeden2023compositions}, and our approach is closely related. There are, however, a few key differences. The approach in \cite{breeden2023compositions} is able to consider a plant with unforced dynamics but is restricted to plants with relative degree 2 or less. Additionally, whenever there are conflicting CBFs (two or more CBFs orientated such that increasing the value of one requires decreasing the value of another), the approach in \cite{breeden2023compositions} requires an iterative process to narrow down the safe set, and the conditions under which this process converges are unclear. In contrast, while we restrict ourselves to multi-integrator plants, our approach is able to handle arbitrary relative degree, and in the single-input case, handles conflicting CBFs via a tuning algorithm which is proven to converge in a finite number of steps.

In summary, the main contributions of this paper are as follows:
\begin{enumerate}
    \item a safety filter for a single-input $n$th-order integrator plant which guarantees satisfaction of state constraints in the presents of input saturation and is analytical for any $n \geq 1$,
    \item a guaranteed method of tuning multiple such filters to guarantee satisfaction of multiple simultaneous state constraints, and
    \item a discussion of extending our approach to the multi-input $n$th-order integrator setting, validated in simulation.
\end{enumerate}
In Section \ref{sec:preliminaries}, we provide a few preliminaries and a useful technical lemma. Section \ref{sec:problem_statements} presents two problem statements for a single-input system, where Problem \ref{plm:simplified} considers a simplified setting with only one of the $n$ states constrained, while Problem \ref{plm:full} considers a general setting with all $n$ states constrained. These problem statements are addressed in Sections \ref{sec:simplified_solution} and \ref{sec:full_solution} respectively. Section \ref{sec:extension} discusses an extension of our approach from the single-input to the multi-input setting. Finally, Section \ref{sec:simulations} demonstrates our approach in simulation on a linearized quadrotor drone, and Section \ref{sec:conclusions_future_work} concludes. Proofs and supporting material can be found in the appendices.
\section{Preliminaries} \label{sec:preliminaries}

This section introduces definitions and background material for the remainder of the paper. For the remainder of this section, consider an autonomous dynamical system
\begin{equation} \label{eqn:prelim_plant}
    \dot{\vec{x}} = \vec{f}(\vec{x}) + G(\vec{x})\vec{u}
\end{equation}
where $\vec{x} \in \bbR^n$, $\vec{u} \in \bbR^m$, $\vec{f}(\vec{x})$ and $G(\vec{x})$ are locally Lipschitz, and $\vec{u}(t)$ is subject to the input constraint
\begin{equation} \label{eqn:input_constraint}
    \vec{u}(t) \in \calU \subset \bbR^m\ \forall t \geq 0.
\end{equation}
Consider also a ``safe set" $S \subset \bbR^n$ defined by a smooth function $h(\vec{x})$ as follows:
\begin{subequations} \label{eqn:safe_set}
    \begin{align}
        S &= \{\vec{x} \in \bbR^n : h(\vec{x}) \geq 0\} \label{eqn:S} \\
        \partial S &= \{\vec{x} \in \bbR^n : h(\vec{x}) = 0\} \label{eqn:S_boundary} \\
        \mathrm{int}(S) &= \{\vec{x} \in \bbR^n : h(\vec{x}) > 0\} \label{eqn:S_interior}
    \end{align}
\end{subequations}
where $\partial S$ and $\mathrm{int}(S)$ denote the boundary and interior of $S$ respectively.

\subsection{Definitions} \label{subsec:definitions}

The following definitions are widely used in the CBF literature:
\begin{definition}[\cite{ames2019CBF}]
    A continuous function $\alpha : [0, \infty) \to [0, \infty)$ is \textit{class-$\calK_\infty$} (denoted $\alpha \in \calK_\infty$) if it is strictly increasing, $\alpha(0) = 0$, and $\lim_{r \to \infty} \alpha(r) = \infty$.
\end{definition}
\begin{definition}[\cite{ames2019CBF}]
    Consider a function $h : \bbR^n \to \bbR$ which describes a set $S$ as in \eqref{eqn:S}-\eqref{eqn:S_interior}. Then, $h$ is a \textit{control barrier function (CBF)} if there exists an $\alpha \in \calK_\infty$ such that, for every $\vec{x} \in S$, there exists a $\vec{u} \in \bbR^m$ satisfying
    \begin{equation} \label{eqn:cbf_def}
        \frac{\partial h}{\partial\vec{x}}\Big|_{\vec{x}}(\vec{f}(\vec{x}) + G(\vec{x})\vec{u}) \geq -\alpha(h(\vec{x})).
    \end{equation}
\end{definition}
\begin{definition}[\cite{agrawal2021ICCBFs}] \label{def:ICCBF}
    Consider a function $h : \bbR^n \to \bbR$ which describes a set $S$ as in \eqref{eqn:S}-\eqref{eqn:S_interior}. Then, $h$ is a \textit{input-constrained control barrier function (ICCBF)} if there exists an $\alpha \in \calK_\infty$ such that, for every $\vec{x} \in S$, there exists a $\vec{u} \in \calU$ satisfying \eqref{eqn:cbf_def}.
\end{definition}
\begin{definition}[\cite{ames2019CBF}]
    A control policy $\vec{u} = \vec{k}(\vec{x})$ renders the set $S$ \textit{forward-invariant} if, for any $\vec{x}(0) \in S$, the policy results in $\vec{x}(t) \in S\ \forall t \geq 0$.
\end{definition}
\begin{definition}[\cite{ames2019CBF}]
    A control policy $\vec{u} = \vec{k}(\vec{x})$ renders the closed-loop dynamical system \textit{safe with respect to $S$} if it renders $S$ forward-invariant.
\end{definition}

In addition to the common definitions above, we introduce the following definitions. Consider selection of the control input $\vec{u}(t)$ by the following pointwise-in-time optimization problem:
\begin{subequations} \label{eqn:safety_filter}
    \begin{gather}
        \vec{u}(t) = \argminbelow_{\vec{u} \in \bbR^m} \|\vec{u} - \vec{u}_{nom}(t)\|^2 \label{eqn:safety_filter_cost} \\
        \mathrm{s.t.} \nonumber \\
        \vec{b}_1(\vec{x}(t))^\top\vec{u} + c_1(\vec{x}(t)) \geq 0, \label{eqn:safety_filter_constraint_1} \\
        \vdots \nonumber \\
        \vec{b}_N(\vec{x}(t))^\top\vec{u} + c_N(\vec{x}(t)) \geq 0, \label{eqn:safety_filter_constraint_N} \\
        \vec{u} \in \calU \label{eqn:safety_filter_input_constraint}
    \end{gather}
\end{subequations}
where $\vec{u}_{nom}(t)$ is any nominal control input, and $\vec{b}_k : \bbR^n \to \bbR^m$ and $c_k : \bbR^n \to \bbR$, $k \in \{1, \dots, N\}$ are continuous functions. Note that if $\calU$ is the intersection of half-spaces, \eqref{eqn:safety_filter_cost}-\eqref{eqn:safety_filter_input_constraint} is a quadratic program.

For the following definitions, consider a composite set $\bar{S} = \bigcap_{\ell=1}^p S_\ell \subset \bbR^n$ where each $S_\ell$ is defined by a smooth function $h_\ell(\vec{x})$ as in \eqref{eqn:S}-\eqref{eqn:S_interior}.
\begin{definition} \label{def:safety_filter}
    The filter in \eqref{eqn:safety_filter_cost}-\eqref{eqn:safety_filter_input_constraint} is a \textit{safety filter} for the plant in \eqref{eqn:prelim_plant} with respect to $\bar{S}$ if there exist $\alpha_\ell \in \calK_\infty$ such that, for any $\vec{x} \in \bar{S}$ such that the optimization problem is feasible, choosing $\vec{u}$ according to \eqref{eqn:safety_filter} ensures that
    \begin{equation}
        \frac{d}{dt}h_\ell(\vec{x}) \geq -\alpha_\ell(h(\vec{x}))
    \end{equation}
    for every $\ell \in \{1, \dots, p\}$.
\end{definition}
\begin{definition} \label{def:implementability}
    A safety filter for the plant in \eqref{eqn:prelim_plant} with respect to $\bar{S}$ is \textit{implementable} if the optimization problem is feasible for every $\vec{x} \in \bar{S}$.
\end{definition}


\begin{remark}
    For $S$ and $h(\vec{x})$ in \eqref{eqn:S}-\eqref{eqn:S_interior}, suppose that $h$ a CBF with relative degree 1 (see e.g. \cite{xiao2019HOCBF}). Then, with $N = 1$, \eqref{eqn:safety_filter_cost}-\eqref{eqn:safety_filter_input_constraint} is a safety filter for the plant in \eqref{eqn:prelim_plant} with respect to $S$ if $\vec{b}_1(\vec{x})^\top = \frac{\partial h}{\partial\vec{x}}|_{\vec{x}}G(\vec{x})$ and $c_1(\vec{x}) = \alpha(h(\vec{x})) + \frac{\partial h}{\partial\vec{x}}|_{\vec{x}}\vec{f}(\vec{x})$. Furthermore, implementability of the safety filter is equivalent to $h$ being an ICCBF. The definitions of safety filters and implementability here are useful in generalizing to settings with high relative degree and/or multiple simultaneous safe sets.
\end{remark}

\subsection{A Useful Technical Lemma}

We now present an important technical lemma using the preceeding definitions.
\begin{lemma} \label{lem:key_technical_lemma}
    Consider a collection of sets $S_1, \dots, S_p \subset \bbR^n$ defined by smooth functions $h_1, \dots, h_p : \bbR^n \to \bbR$ as in \eqref{eqn:S}-\eqref{eqn:S_interior}, and define the safe set $\bar{S} = \bigcap_{\ell=1}^p S_\ell$. Suppose that, for some choice of $\vec{b}_k$ and $c_k$, $k \in \{1, \dots, N\}$, \eqref{eqn:safety_filter_cost}-\eqref{eqn:safety_filter_input_constraint} is an implementable safety filter for the input-constrained plant in \eqref{eqn:prelim_plant}-\eqref{eqn:input_constraint} with respect to $\bar{S}$. Then, choosing $\vec{u}(t)$ as the solution to \eqref{eqn:safety_filter_cost}-\eqref{eqn:safety_filter_input_constraint} for any $\vec{u}_{nom}(t)$ renders the closed-loop dynamical system safe with respect to $\bar{S}$.
\end{lemma}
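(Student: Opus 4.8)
The plan is a standard comparison-lemma argument resting on the two hypotheses --- \emph{implementability} and the \emph{safety-filter} property. Fix an initial condition $\vec{x}(0) \in \bar{S}$ and an arbitrary nominal input $\vec{u}_{nom}(t)$, and let $\vec{x}(t)$ denote the resulting closed-loop trajectory on its maximal interval of existence (well-posedness is revisited below). The first observation is that implementability makes the program \eqref{eqn:safety_filter} feasible at \emph{every} point of $\bar{S}$; consequently the safety-filter property applies at every time $t$ for which $\vec{x}(t) \in \bar{S}$, and there the selected input $\vec{u}(t)$ satisfies $\tfrac{d}{dt}h_\ell(\vec{x}(t)) = \tfrac{\partial h_\ell}{\partial\vec{x}}\big|_{\vec{x}(t)}\big(\vec{f}(\vec{x}(t)) + G(\vec{x}(t))\vec{u}(t)\big) \geq -\alpha_\ell(h_\ell(\vec{x}(t)))$ for every $\ell \in \{1,\dots,p\}$ --- a bound which, crucially, does not depend on the choice of $\vec{u}_{nom}$.

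Next I would argue that the closed-loop trajectory never leaves $\bar{S}$. Since the (multi-)integrator plants of interest are complete under the bounded input --- their states grow at most polynomially --- it suffices to show that $T := \sup\{\,t \geq 0 : \vec{x}(s) \in \bar{S}\ \text{for all}\ s \in [0,t]\,\}$ equals $\infty$. On $[0,T)$ the differential inequalities above hold, so for each $\ell$ the comparison lemma bounds $h_\ell(\vec{x}(t))$ from below by the solution $y_\ell(t)$ of the scalar problem $\dot{y}_\ell = -\alpha_\ell(y_\ell)$, $y_\ell(0) = h_\ell(\vec{x}(0)) \geq 0$; because $\alpha_\ell \in \calK_\infty$ one has $\alpha_\ell(0) = 0$ and $\alpha_\ell \geq 0$ on $[0,\infty)$, so this scalar flow is nonincreasing and cannot cross zero, whence $h_\ell(\vec{x}(t)) \geq y_\ell(t) \geq 0$ and $\vec{x}(t) \in \bar{S}$ throughout $[0,T)$. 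Closedness of $\bar{S}$ and continuity of $\vec{x}(\cdot)$ then give $\vec{x}(T) \in \bar{S}$ as well when $T$ is finite, so the differential inequalities in fact hold on the closed interval $[0,T]$; a continuation argument --- the delicate point, taken up below --- then forces $T = \infty$, and hence $\vec{x}(t) \in \bar{S}$ for all $t \geq 0$, which is exactly safety of the closed loop with respect to $\bar{S}$.

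Two steps will need the most care, and the second is the genuine obstacle. First, well-posedness of the closed-loop vector field: the input $\vec{u}(t)$ is defined only implicitly, as the minimizer in \eqref{eqn:safety_filter}, which is single-valued whenever the feasible set is convex and nonempty --- as it is on $\bar{S}$ by implementability --- but which need not be locally Lipschitz, or even continuous, without a constraint-qualification-type assumption; I would handle this either by invoking known regularity results for parametric convex QPs on the region where the filter is implementable, or --- as in the constructive part of the paper --- by working directly with the explicit continuous feedback furnished by the analytic safety filter, for which a Carath\'eodory (indeed classical) solution plainly exists. Second, and more delicate, is the non-exit step invoked above: one must exclude a \emph{tangential} escape through $\partial\bar{S}$, where some $\tfrac{d}{dt}h_\ell = 0$ yet $h_\ell$ could still become negative at second (or higher) order. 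This is precisely the gap that the class-$\calK_\infty$ margin $-\alpha_\ell(h_\ell)$, via the comparison lemma, is designed to close, but making it airtight requires $\alpha_\ell$ to be regular enough --- local Lipschitzness being the usual standing assumption, and one comfortably met by the $\alpha_\ell$ that appear in the sequel --- so that the comparison solution $y_\ell$ is unique and remains at $0$ once it reaches $0$. Under that mild proviso the argument above goes through.
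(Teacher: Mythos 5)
Your proposal is correct and follows essentially the same route as the paper's own proof: implementability guarantees feasibility of \eqref{eqn:safety_filter} at every point of $\bar{S}$, the safety-filter property then yields $\frac{d}{dt}h_\ell \geq -\alpha_\ell(h_\ell)$ along the closed-loop trajectory for each $\ell$, and a comparison-lemma argument keeps each $h_\ell$ nonnegative, so $\bar{S}$ is forward-invariant. Your added caveats on well-posedness of the QP-defined feedback and on local Lipschitzness of the $\alpha_\ell$ are reasonable refinements but do not constitute a different approach.
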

\begin{proof}
    See Appendix \ref{app:technical_lemma_proof}.
\end{proof}

With Lemma \ref{lem:key_technical_lemma} established, the remainder of this paper will focus on analytic methods for obtaining implementable safety filters for certain classes of problems.
\section{Problem Statements} \label{sec:problem_statements}

We consider a dynamical system consisting of $n$ chained integrators:
\begin{equation} \label{eqn:plant_scalar_integrator}
    \dot{x}_1 = x_2,\; \dot{x}_2 = x_3,\; \cdots,\; \dot{x}_{n-1} = x_n,\; \dot{x}_n = u
\end{equation}
subject to the input constraint
\begin{equation} \label{eqn:input_constraints}
    u \in \calU \subset \bbR,\; \calU = \{u \in \bbR : \ubar{u} \leq u \leq \bar{u}\}
\end{equation}
where $\ubar{u} < 0 < \bar{u}$. Let $\vec{x} = [x_1, x_2, \cdots, x_n]^\top$.

We consider the following two problem statements:
\begin{problem}[Simplified] \label{plm:simplified}
    Determine a $u(t)$ for the plant in \eqref{eqn:plant_scalar_integrator}-\eqref{eqn:input_constraints} such that the state constraint
    \begin{equation} \label{eqn:state_constraints_simple}
        \vec{x}(t) \in S \subset \bbR^n,\; S = \{\vec{x} \in \bbR^n : x_1 \geq \ubar{x}_1\}
    \end{equation}
    is forward-invariant.
\end{problem}
\begin{problem}[Full] \label{plm:full}
    Determine a $u(t)$ for the plant in \eqref{eqn:plant_scalar_integrator}-\eqref{eqn:input_constraints} such that the state constraints
    \begin{equation} \label{eqn:state_constraints_full}
        \begin{gathered}
        \vec{x}(t) \in S_1 \cap S_2 \cap \cdots \cap S_n \subset \bbR^n, \\
        S_j = \{\vec{x} \in \bbR^n : \ubar{x}_j \leq x_j \leq \bar{x}_j\}
    \end{gathered}
    \end{equation}
    are forward-invariant, where $\ubar{x}_j < 0 < \bar{x}_j\ \forall j \geq 2$.
\end{problem}

We will first consider the easier problem in Problem \ref{plm:simplified} in order to lay out a principled method of constraining $u(t)$ to satisfy simultaneous state and input constraints. We will then show how to extend the approach derived for the simplified problem to handle the full problem in Problem \ref{plm:full}.
\section{Solution to Problem \ref{plm:simplified}} \label{sec:simplified_solution}

We first address the simplifed problem statement in Problem \ref{plm:simplified}. A full derivation of the algorithm, complete with motivations for the design choices, can be found in Appendix \ref{app:ICCBF_derivation}.

Consider the following series of recursively-defined functions:
\begin{subequations} \label{eqn:simplified_function_recursion}
    \begin{align}
        h_1(\vec{x}) &= x_1 - \ubar{x}_1, \; \Delta_1 = 0, \; \gamma_0 = 0, \label{eqn:simplified_h_1} \\
        h_i(\vec{x}) &= x_i + \gamma_{i-1}\sqrt{h_{i-1}} - \frac{\gamma_{i-2}^2}{2} - \epsilon_{i-1}\ \forall i \in \{2, \dots, n\}, \label{eqn:simplified_h_i} \\
        \Delta_i(\vec{x}) &= \frac{\gamma_{i-1}}{2\sqrt{h_{i-1}}}(h_i + \Delta_{i-1} + \epsilon_{i-1})\ \forall i \in \{2, \dots, n-1\} \label{eqn:simplified_Delta_i}
    \end{align}
\end{subequations}
with some constant parameters $\gamma_i, \epsilon_i > 0$.
Then, consider the relative-degree-$i$ safe sets
\begin{equation} \label{eqn:simplified_safe_subsets}
    S_i = \{\vec{x} \in \bbR^n : h_i(\vec{x}) \geq \epsilon_i^2/\gamma_i^2\}
\end{equation}
and the composite safe set
\begin{equation} \label{eqn:simplified_composite_safe_set}
    \bar{S} = \bigcap_{i=1}^n S_i,
\end{equation}
and note that $\bar{S} \subset S$ for $S$ defined in \eqref{eqn:state_constraints_simple}. The following is our main result for Problem \ref{plm:simplified}.
\begin{theorem} \label{thm:simplified_solution}
    Consider the problem statement in Problem \ref{plm:simplified}. Define $h_1, \dots, h_n$ and $\Delta_2, \dots, \Delta_{n-1}$ as in \eqref{eqn:simplified_h_1}-\eqref{eqn:simplified_Delta_i} with $\gamma_i, \epsilon_i > 0$ and $\gamma_{n-1} \leq \sqrt{2\bar{u}}$. Then, \eqref{eqn:safety_filter_cost}-\eqref{eqn:safety_filter_input_constraint} with $N = 1$, $b_1 = 1$,
    \begin{equation} \label{eqn:simplified_c}
        \begin{aligned}
            c_1 =&\ \gamma_n\sqrt{h_n} - \frac{\gamma_{n-1}^2}{2} - \epsilon_n + \frac{\gamma_{n-1}}{2\sqrt{h_{n-1}}}(h_n + \Delta_{n-1} + \epsilon_{n-1}),
        \end{aligned}
    \end{equation}
    is an implementable safety filter for the plant in \eqref{eqn:plant_scalar_integrator} with respect to $\bar{S}$ in \eqref{eqn:simplified_composite_safe_set}.
\end{theorem}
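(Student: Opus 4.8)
The plan is to verify the two defining properties of an \emph{implementable safety filter} with respect to $\bar S$: (i) that along the closed loop each $h_i$ obeys a differential inequality $\frac{d}{dt}h_i \geq -\alpha_i(h_i - \epsilon_i^2/\gamma_i^2)$ for some $\alpha_i \in \calK_\infty$, and (ii) that the single constraint \eqref{eqn:safety_filter_constraint_1} together with $u \in \calU$ is feasible at every $\vec x \in \bar S$. To match the stated definitions, set $\tilde h_i = h_i - \epsilon_i^2/\gamma_i^2$, so $S_i = \{\tilde h_i \geq 0\}$; note that on $\bar S$ every $h_i \geq \epsilon_i^2/\gamma_i^2 > 0$, so all square roots appearing in \eqref{eqn:simplified_function_recursion}--\eqref{eqn:simplified_c} are well-defined and positive there. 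The heart of the argument is a single closed-form identity for $\dot h_i$.

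I would first prove by induction on $i$ that, on the region where $h_1, \dots, h_i > 0$,
\begin{equation*}
    \dot h_i + \gamma_i\sqrt{h_i} - \epsilon_i = h_{i+1} + \Delta_i, \qquad i \in \{1, \dots, n-1\},
\end{equation*}
and that the analogous relation at level $n$ reads $\dot h_n = u + \Delta_n - \tfrac{\gamma_{n-1}^2}{2}$, where $\Delta_n := \frac{\gamma_{n-1}}{2\sqrt{h_{n-1}}}(h_n + \Delta_{n-1} + \epsilon_{n-1})$ is the natural continuation of the recursion \eqref{eqn:simplified_Delta_i} to $i = n$. The base case $i = 1$ is immediate: $\dot h_1 = x_2$ and, since $\gamma_0 = 0$, $h_2 = x_2 + \gamma_1\sqrt{h_1} - \epsilon_1$, so $\dot h_1 = h_2 - \gamma_1\sqrt{h_1} + \epsilon_1 = h_2 + \Delta_1$. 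For the inductive step, differentiate \eqref{eqn:simplified_h_i}: $\dot h_i = \dot x_i + \frac{\gamma_{i-1}}{2\sqrt{h_{i-1}}}\dot h_{i-1}$; substitute $\dot x_i = x_{i+1}$ with the expression for $x_{i+1}$ read off from \eqref{eqn:simplified_h_i} at level $i+1$, and substitute the inductive hypothesis for $\dot h_{i-1}$; the recursion \eqref{eqn:simplified_Delta_i} is precisely what cancels the $\gamma_{i-1}^2/2$ constants and collapses the remainder into $h_{i+1} + \Delta_i$. Along the way I would also record, again by induction from \eqref{eqn:simplified_Delta_i}, that $\Delta_1 = 0$ and $\Delta_i > 0$ on $\bar S$ for $i \geq 2$.

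With the identity in hand, property (i) is short. For $i \leq n-1$ and $\vec x \in \bar S$ we have $h_{i+1} \geq \epsilon_{i+1}^2/\gamma_{i+1}^2 > 0$ and $\Delta_i \geq 0$, hence $\dot h_i \geq -\gamma_i\sqrt{h_i} + \epsilon_i$ (and this holds independently of $u$, since for $i \leq n-1$ the identity shows $\dot h_i$ is a function of the state alone). For $i = n$, the filter constraint $u + c_1 \geq 0$ is, by the level-$n$ identity and the definition \eqref{eqn:simplified_c} of $c_1$, exactly $\dot h_n \geq -\gamma_n\sqrt{h_n} + \epsilon_n$. In both cases the elementary bound $\sqrt{s + a^2} - a \leq \sqrt{s}$ (valid for $a, s \geq 0$), applied with $a = \epsilon_i/\gamma_i$ and $s = \tilde h_i$, gives $-\gamma_i\sqrt{h_i} + \epsilon_i \geq -\gamma_i\sqrt{\tilde h_i}$, so $\frac{d}{dt}\tilde h_i \geq -\alpha_i(\tilde h_i)$ with $\alpha_i(s) = \gamma_i\sqrt{s} \in \calK_\infty$.

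Property (ii) is where the hypothesis $\gamma_{n-1} \leq \sqrt{2\bar u}$ enters. On $\bar S$, using $h_n \geq \epsilon_n^2/\gamma_n^2$ (so $\gamma_n\sqrt{h_n} - \epsilon_n \geq 0$) and $\Delta_n > 0$, the expression \eqref{eqn:simplified_c} satisfies $c_1 \geq -\gamma_{n-1}^2/2 \geq -\bar u$; hence $u = \bar u \in \calU$ satisfies \eqref{eqn:safety_filter_constraint_1}, and the program \eqref{eqn:safety_filter} is feasible at every $\vec x \in \bar S$. Combining (i) and (ii) gives the theorem (and, via Lemma \ref{lem:key_technical_lemma}, forward-invariance of $\bar S$, hence of $S$). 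I expect the main obstacle to be the inductive proof of the $\dot h_i$ identity — specifically, verifying that \eqref{eqn:simplified_Delta_i} is engineered to absorb exactly the chain-rule cross-terms and the $\gamma^2/2$ constants; everything downstream of that identity is a couple of lines of elementary estimates.
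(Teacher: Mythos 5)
Your proposal is correct, and it follows what is evidently the paper's intended argument: the inductive identity $\dot h_i + \gamma_i\sqrt{h_i} - \epsilon_i = h_{i+1} + \Delta_i$ (with $\dot h_n = u + \Delta_n - \gamma_{n-1}^2/2$) checks out exactly, the shift to $\tilde h_i = h_i - \epsilon_i^2/\gamma_i^2$ with $\alpha_i(s) = \gamma_i\sqrt{s}$ handles the safety-filter definition properly, and feasibility of $u = \bar u$ on $\bar S$ via $c_1 \geq -\gamma_{n-1}^2/2 \geq -\bar u$ is precisely where $\gamma_{n-1} \leq \sqrt{2\bar u}$ is used. No gaps beyond routine algebra you have already outlined.
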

\begin{proof}
    See Appendix \ref{app:simplified_proof}.
\end{proof}
\begin{corollary} \label{cor:simplified_solution}
    If $\vec{x}(0) \in \bar{S}$ and $u(t)$ is chosen as the solution to the safety filter described in Theorem \ref{thm:simplified_solution} for any $u_{nom}(t)$, then $\vec{x}(t) \in S\ \forall t \geq 0$ for $S$ in \eqref{eqn:state_constraints_simple}.
\end{corollary}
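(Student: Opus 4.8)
The plan is to derive the corollary directly from Theorem~\ref{thm:simplified_solution} and the key technical Lemma~\ref{lem:key_technical_lemma}; essentially all the substantive work is already contained in those two results, so the argument is short.

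First I would invoke Theorem~\ref{thm:simplified_solution}: under the stated hypotheses ($\gamma_i,\epsilon_i > 0$ and $\gamma_{n-1} \le \sqrt{2\bar u}$), the filter \eqref{eqn:safety_filter} with $N=1$, $b_1 = 1$, and $c_1$ as in \eqref{eqn:simplified_c} is an \emph{implementable} safety filter for the plant \eqref{eqn:plant_scalar_integrator} with respect to $\bar S$. In particular, implementability guarantees the optimization problem defining the filter is feasible at every $\vec x \in \bar S$, which is precisely the standing hypothesis needed to apply Lemma~\ref{lem:key_technical_lemma} (the functions $h_i - \epsilon_i^2/\gamma_i^2$ defining $S_1,\dots,S_n$ in \eqref{eqn:simplified_safe_subsets} are smooth by \eqref{eqn:simplified_function_recursion}).

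Next I would apply Lemma~\ref{lem:key_technical_lemma} with $p = n$ and the sets $S_1,\dots,S_n$: choosing $u(t)$ as the solution to the filter for any $u_{nom}(t)$ renders the closed-loop system safe with respect to $\bar S$, i.e., $\bar S$ is forward-invariant. Hence $\vec x(0) \in \bar S$ implies $\vec x(t) \in \bar S$ for all $t \ge 0$. Finally I would close with the set inclusion $\bar S \subseteq S$ already noted below \eqref{eqn:simplified_composite_safe_set}: since $h_1(\vec x) = x_1 - \ubar x_1$ and $S_1 = \{\vec x : h_1(\vec x) \ge \epsilon_1^2/\gamma_1^2\}$ with $\epsilon_1^2/\gamma_1^2 > 0$, we have $\bar S \subseteq S_1 \subseteq \{x_1 \ge \ubar x_1\} = S$. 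Combining the two steps yields $\vec x(t) \in \bar S \subseteq S$ for all $t \ge 0$, which is the claim.

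I do not expect a genuine obstacle here: the only things to check are bookkeeping items — that the smoothness and feasibility premises of Lemma~\ref{lem:key_technical_lemma} are exactly what Theorem~\ref{thm:simplified_solution} supplies, and that the nontrivial inclusion $\bar S \subseteq S$ holds — while the hard content (constructing $c_1$ so the filter is a valid safety filter and proving its implementability under $\gamma_{n-1} \le \sqrt{2\bar u}$) lives entirely in the proof of Theorem~\ref{thm:simplified_solution}.
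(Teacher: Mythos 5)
Your proposal is correct and matches the paper's own proof, which likewise combines Theorem \ref{thm:simplified_solution} with Lemma \ref{lem:key_technical_lemma} to conclude forward-invariance of $\bar{S}$ and then uses the inclusion $\bar{S} \subset S$. The extra detail you supply (verifying $\bar{S} \subseteq S_1 \subseteq S$ via $h_1 = x_1 - \ubar{x}_1$ and the positive threshold $\epsilon_1^2/\gamma_1^2$) is a harmless elaboration of what the paper states without proof.
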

\begin{proof}
    From Theorem \ref{thm:simplified_solution} and Lemma \ref{lem:key_technical_lemma}, it immediately follows that $\bar{S}$ is forward-invariant. The result follows from the fact that $\bar{S} \subset S$.
\end{proof}

\subsection{Intuition: Why is Square Root the Magic Function?} \label{subsec:intuition_double_integrator}

For intuition regarding the choice of $h_i$, particularly for the use of the square root function, consider the special case of Problem \ref{plm:simplified} with $n = 2$. We can think of this case as applying a force to a particle of unit mass to prevent it from crossing a barrier at $x_1 = \ubar{x}_1$. If the particle is not moving toward the barrier at time $t$, i.e. $x_2(t) \geq 0$, then $|x_2(t)|$ can be arbitrarily large without endangering the particle. However, suppose the particle is moving toward the barrier, i.e. $x_2(t) < 0$. Then, for any value of $x_1(t)$, it is possible for $|x_2(t)|$ to be large enough that the particle cannot avoid colliding even under maximum deceleration, i.e. $u(t) = \bar{u}$.

Quantitatively, we can define the particle's minimum stopping distance at time $t$, $x_{stop}(t)$, as the distance from the barrier such that, if $x_2(t) < 0$ and $u(\tau) = \bar{u} \ \forall \tau \geq t$, the particle will eventually arrive at the barrier with zero speed. It can be shown that the minimum stopping distance at time $t$ is given by
\begin{equation} \label{eqn:min_stopping_distance}
    x_{stop}(t) = \frac{x_2(t)^2}{2\bar{u}}.
\end{equation}
Then, we can prevent the particle from crossing the barrier at $x_1 = \ubar{x}_1$ by enforcing the state constraint
\begin{equation} \label{eqn:min_stopping_dist_criterion_1}
    \begin{cases} x_1(t) - \ubar{x}_1 \geq 0, & x_2(t) \geq 0 \\ x_1(t) - \ubar{x}_1 \geq x_{stop}(t), & x_2(t) < 0 \end{cases}\ \forall t \geq 0
\end{equation}
which, assuming $x_1(0) \geq \ubar{x}_1$, is equivalent to
\begin{equation} \label{eqn:min_stopping_dist_criterion_2}
    x_2(t) \geq -\sqrt{2\bar{u}(x_1(t) - \ubar{x}_1)}\ \forall t \geq 0.
\end{equation}

As $x_{stop}$ is a nonnegative quantity, it is clear that $x_1 - \ubar{x}_1 \geq x_{stop}$ implies $x_1 \geq \ubar{x}_1$. Furthermore, by the definition of $x_{stop}$, at every time $t$ we know that applying $u(\tau) = \bar{u}\ \forall \tau \geq t$ will ensure that the particle never crosses the barrier. Rearranging \eqref{eqn:min_stopping_dist_criterion_2}, it is thus clear that
\begin{equation} \label{eqn:double_integrator_ICCBF}
    h_2(\vec{x}) = x_2 + \sqrt{2\bar{u}(x_1 - \ubar{x}_1)}
\end{equation}
is an ICCBF for the state constraint in \eqref{eqn:state_constraints_simple}.

Note that \eqref{eqn:double_integrator_ICCBF} is mostly the same $h_2$ as one would obtain from \eqref{eqn:simplified_h_1}-\eqref{eqn:simplified_Delta_i} with $\gamma_{n-1} = \sqrt{2\bar{u}}$. In \eqref{eqn:simplified_h_1}-\eqref{eqn:simplified_Delta_i}, we add an additional parameter $\epsilon_1 > 0$ which prevents $x_1 - \ubar{x}_1$ from going all the way to zero for numerical stability. Thus, one can think of \eqref{eqn:simplified_h_1}-\eqref{eqn:simplified_Delta_i} as an iterated minimum stopping distance constraint for an $n$th-order integrator, where the $\Delta$s are merely a modification to prevent the number of terms in each $h_i$ from quadratically growing for $i \geq 3$.
\section{Solution to Problem \ref{plm:full}} \label{sec:full_solution}

We now turn our attention to the full problem statement in Problem \ref{plm:full}. As the problem statement contains $2n$ simultaneous state constraints, with $n$ lower bounds and $n$ upper bounds, we must consider $2n$ series of recursively-defined functions. For each $j \in \{1, \dots, n\}$, define:
\begin{subequations} \label{eqn:full_lower_function_recursion}
    \begin{align}
        \ubar{h}_{j1}(\vec{x}) &= x_j - \ubar{x}_j, \; \ubar{\Delta}_{j1} = 0, \; \ubar{\gamma}_{j0} = 0, \label{eqn:lower_h_1} \\
        \ubar{h}_{ji}(\vec{x}) &= x_{i+j-1} + \ubar{\gamma}_{j(i-1)}\sqrt{\ubar{h}_{j(i-1)}} - \frac{\ubar{\gamma}_{j(i-2)}^2}{2} - \ubar{\epsilon}_{j(i-1)}\ \forall i \in \{2, \dots, n-j+1\}, \label{eqn:lower_h_i} \\
        \ubar{\Delta}_{ji}(\vec{x}) &= \frac{\ubar{\gamma}_{j(i-1)}}{2\sqrt{\ubar{h}_{j(i-1)}}}(\ubar{h}_{ji} + \ubar{\Delta}_{j(i-1)} + \ubar{\epsilon}_{j(i-1)})\ \forall i \in \{2, \dots, n-j\} \label{eqn:lower_Delta_i}
    \end{align}
\end{subequations}
and
\begin{subequations} \label{eqn:full_upper_function_recursion}
    \begin{align}
        \bar{h}_{j1}(\vec{x}) &= \bar{x}_j - x_j, \; \bar{\Delta}_{j1} = 0, \; \bar{\gamma}_{j0} = 0, \label{eqn:upper_h_1} \\
        \bar{h}_{ji}(\vec{x}) &= -x_{i+j-1} + \bar{\gamma}_{j(i-1)}\sqrt{\bar{h}_{j(i-1)}} - \frac{\bar{\gamma}_{j(i-2)}^2}{2} - \bar{\epsilon}_{j(i-1)}\ \forall i \in \{2, \dots, n-j+1\}, \label{eqn:upper_h_i} \\
        \bar{\Delta}_{ji}(\vec{x}) &= \frac{\bar{\gamma}_{j(i-1)}}{2\sqrt{\bar{h}_{j(i-1)}}}(\bar{h}_{ji} + \bar{\Delta}_{j(i-1)} + \bar{\epsilon}_{j(i-1)})\ \forall i \in \{2, \dots, n-j\} \label{eqn:upper_Delta_i}
    \end{align}
\end{subequations}
with some constant parameters $\ubar{\gamma}_{ji}, \ubar{\epsilon}_{ji}, \bar{\gamma}_{ji}, \bar{\epsilon}_{ji} > 0$.
Then, consider the relative-degree-$i$ safe sets
\begin{subequations} \label{eqn:full_safe_subsets}
    \begin{align}
        \ubar{S}_{ji} &= \{\vec{x} \in \bbR^n : \ubar{h}_{ji}(\vec{x}) \geq \ubar{\epsilon}_{ji}^2/\ubar{\gamma}_{ji}^2\}, \label{eqn:full_safe_subsets_lower} \\
        \bar{S}_{ji} &= \{\vec{x} \in \bbR^n : \bar{h}_{ji}(\vec{x}) \geq \bar{\epsilon}_{ji}^2/\bar{\gamma}_{ji}^2\} \label{eqn:full_safe_subsets_upper}
    \end{align}
\end{subequations}
and the composite safe set
\begin{equation} \label{eqn:full_composite_safe_set}
    \bar{S} = \bigcap_{j=1}^n\bigcap_{i=1}^{n-j+1} (\ubar{S}_{ji} \cap \bar{S}_{ji}),
\end{equation}
and note that $\bar{S} \subset \bigcap_{j=1}^n S_j$ for $S_j$ defined in \eqref{eqn:state_constraints_full}.

\subsection{Parameter Selection} \label{subsec:full_parameter_selection}

In order to realize an implementable safety filter with $N = 2n$ simultaneous constraints, we need to choose the $2n(n+1)$ parameters $\ubar{\gamma}_{ji}, \ubar{\epsilon}_{ji}, \bar{\gamma}_{ji}, \bar{\epsilon}_{ji}$ appropriately. This can be most easily accomplished by reparametrization. First, reduce the number of parameters from $2n(n+1)$ to $2n$ by choosing, for each $j \in \{1, \dots, n\}$,
\begin{equation} \label{eqn:gamma_epsilon_matching}
\begin{gathered}
    \ubar{\gamma}_{ji} = \bar{\gamma}_{ji} = \gamma_{i+j-1},\quad \ubar{\epsilon}_{ji} = \bar{\epsilon}_{ji} = \epsilon_{i+j-1} \quad \forall i \in \{1, \dots, n-j+1\}
\end{gathered}
\end{equation}
for some $\gamma_1, \dots, \gamma_n, \epsilon_1, \dots, \epsilon_n > 0$ to be calculated. Then, choose any $\delta \in (0, \bar{x}_1 - \ubar{x}_1)$, any $\alpha_2, \dots, \alpha_n > 0$, and any $\beta_1, \dots, \beta_n, \eta_1, \dots, \eta_{n-1} \in (0, 1)$. Finally, we will seek $\gamma_1, \dots, \gamma_n, \epsilon_1, \dots, \epsilon_n$ satisfying the following set of inequalities:
\begin{subequations}
    \begin{align}
        &\gamma_1 > 0 \label{eqn:gamma_1_lower_limit} \\
        &\gamma_2 \geq \textstyle \max\{\frac{(1/2 + \alpha_2)\gamma_1^{3/2}}{\sqrt{\beta_1\sqrt{\delta}}}, \frac{(1 + \alpha_2)\gamma_1^{3/2}}{\sqrt{\beta_1\sqrt{\bar{x}_1 - \ubar{x}_1}}}, \frac{(1/2 + \alpha_2)\gamma_1^2}{\sqrt{\min\{-\ubar{x}_2, \bar{x}_2\}}}\}, \label{eqn:gamma_2_lower_limit} \\
        &\gamma_i \geq \textstyle \max\{\frac{(1 + \alpha_i)\gamma_{i-1}^2}{\gamma_{i-2}\sqrt{\beta_{i-1}\alpha_{i-1}}}, \frac{(1/2 + \alpha_i)\gamma_{i-1}^{3/2}}{\sqrt{\beta_{i-1}\sqrt{\min\{-\ubar{x}_{i-1}, \bar{x}_{i-1}\}}}}, \frac{(1 + \alpha_i)\gamma_{i-1}^{3/2}}{\sqrt{\beta_{i-1}\sqrt{\bar{x}_{i-1} - \ubar{x}_{i-1}}}}, \frac{(1/2 + \alpha_i)\gamma_{i-1}^2}{\sqrt{\eta_{i-2}\min\{-\ubar{x}_i, \bar{x}_i\}}}\}\ \forall i \in \{3, \dots, n\}, \label{eqn:gamma_i_lower_limit} \\
        &\epsilon_1 \leq \textstyle (1 - \beta_1)\min\{\gamma_1\sqrt{\delta}, \frac{\gamma_1\sqrt{\bar{x}_1 - \ubar{x}_1}}{2}\}, \label{eqn:epsilon_1_upper_limit} \\
        &\epsilon_i \leq \textstyle (1 - \beta_i)\min\{\frac{\alpha_i\gamma_{i-1}^2}{2}, \gamma_i\sqrt{\min\{-\ubar{x}_i, \bar{x}_i\}}, \frac{\gamma_i\sqrt{\bar{x}_i - \ubar{x}_i}}{2}\}\ \forall i \in \{2, \dots, n\}, \label{eqn:epsilon_i_upper_limit} \\
        &\gamma_i \leq \sqrt{2(1 - \eta_i)\min\{-\ubar{x}_{i+2}, \bar{x}_{i+2}\}}\ \forall i \in \{1, \dots, n-2\}, \label{eqn:gamma_i_upper_limit} \\
        &\gamma_{n-1} \leq \sqrt{2(1 - \eta_{n-1})\min\{-\ubar{u}, \bar{u}\}} \label{eqn:gamma_n-1_upper_limit}
    \end{align}
\end{subequations}

Appendix \ref{app:parameter_tuning} presents an algorithm which, for given values of $\delta$, $\alpha_i$, $\beta_i$, and $\eta_i$, seeks the set of parameters $\gamma_1, \dots, \gamma_n, \epsilon_1, \dots, \epsilon_n$ satisfying \eqref{eqn:gamma_1_lower_limit}-\eqref{eqn:gamma_i_upper_limit} with the largest possible value of$\gamma_1$. Our tuning algorithm has the following property:
\begin{lemma} \label{lem:tuning}
    Algorithm \ref{alg:parameter_tuning} returns a set of parameters $\gamma_1, \dots, \gamma_n, \epsilon_1, \dots, \epsilon_n$ which satisfy \eqref{eqn:gamma_1_lower_limit}-\eqref{eqn:gamma_n-1_upper_limit} in finitely many steps.
\end{lemma}
\begin{proof}
    See Appendix \ref{app:tuning_proof}.
\end{proof}
\begin{remark}
    Algorithm \ref{alg:parameter_tuning} will tune the CBF parameters such that the safety filter considers any location $x_1 \in [\ubar{x}_1 + \delta, \bar{x}_1 - \delta]$ to be a permissible ``fixed point." In other words, whenever $x_1 \in [\ubar{x}_1 + \delta, \bar{x}_1 - \delta]$ and $x_2 = \cdots = x_n = 0$, the safety filter will treat $u = 0$ as a feasible control input and allow the closed-loop system to hold its current position. Thus, $\delta$ should be chosen such that $[\ubar{x}_1 + \delta, \bar{x}_1 - \delta]$ is the normal operating range for a given application. Note, however, that $[\ubar{x}_1 + \delta, \bar{x}_1 - \delta]$ is not guaranteed to be forward-invariant; only $[\ubar{x}_1, \bar{x}_1]$ is guaranteed to be forward-invariant.
\end{remark}

\subsection{Main Result}

The following is our main result for Problem \ref{plm:full}.
\begin{theorem} \label{thm:full_solution}
    Consider the problem statement in Problem \ref{plm:full}. Choose any $\delta \in (0, \bar{x}_1 - \ubar{x}_1)$, $\alpha_2, \dots, \alpha_n > 0$, and $\beta_1, \dots, \beta_n, \eta_1, \dots, \eta_{n-1} \in (0, 1)$ and obtain parameters $\gamma_1, \dots, \gamma_n$ and $\epsilon_1, \dots, \epsilon_n$ from Algorithm \ref{alg:parameter_tuning}. Then, for each $j \in \{1, \dots, n\}$, define $\ubar{h}_{j1}, \dots, \ubar{h}_{j(n-j+1)}$, $\ubar{\Delta}_{j2}, \dots, \ubar{\Delta}_{j(n-j)}$, $\bar{h}_{j1}, \dots, \bar{h}_{j(n-j+1)}$, and $\bar{\Delta}_{j2}, \dots, \bar{\Delta}_{j(n-j)}$ as in \eqref{eqn:lower_h_1}-\eqref{eqn:upper_Delta_i} with $\ubar{\gamma}_{ji}, \ubar{\epsilon}_{ji}, \bar{\gamma}_{ji}, \bar{\epsilon}_{ji}$ chosen according to \eqref{eqn:gamma_epsilon_matching}. For each $j \in \{1, \dots, n\}$, set $b_j$, $b_{n+j}$, $c_j$, and $c_{n+j}$ in \eqref{eqn:safety_filter_cost}-\eqref{eqn:safety_filter_input_constraint} as
    \begin{subequations}
        \begin{align}
            &b_j =\ 1, \label{eqn:full_b_lower} \\
            &c_j =\ \ubar{\gamma}_{j(n-j+1)}\sqrt{\ubar{h}_{j(n-j+1)}} - \frac{\ubar{\gamma}_{j(n-j)}^2}{2} - \ubar{\epsilon}_{j(n-j+1)} + \frac{\ubar{\gamma}_{j(n-j)}}{2\sqrt{\ubar{h}_{j(n-j)}}}(\ubar{h}_{j(n-j+1)} + \ubar{\Delta}_{j(n-j)} + \ubar{\epsilon}_{j(n-j)}), \label{eqn:full_c_lower} \\
            &b_{n+j} =\ -1, \label{eqn:full_b_upper} \\
            &c_{n+j} =\ \bar{\gamma}_{j(n-j+1)}\sqrt{\bar{h}_{j(n-j+1)}} - \frac{\bar{\gamma}_{j(n-j)}^2}{2} - \bar{\epsilon}_{j(n-j+1)} + \frac{\bar{\gamma}_{j(n-j)}}{2\sqrt{\bar{h}_{j(n-j)}}}(\bar{h}_{j(n-j+1)} + \bar{\Delta}_{j(n-j)} + \bar{\epsilon}_{j(n-j)}). \label{eqn:full_c_upper}
        \end{align}
    \end{subequations}
    Then, \eqref{eqn:safety_filter_cost}-\eqref{eqn:safety_filter_input_constraint} with $N = 2n$ is an implementable safety filter for the plant in \eqref{eqn:plant_scalar_integrator} with respect to $\bar{S}$ in \eqref{eqn:full_composite_safe_set}.
\end{theorem}
\begin{proof}
    See Appendix \ref{app:full_proof}.
\end{proof}
\begin{corollary} \label{cor:full_solution}
    If $\vec{x}(0) \in \bar{S}$ and $u(t)$ is chosen as the solution to the safety filter described in Theorem \ref{thm:full_solution} for any $u_{nom}(t)$, then $\vec{x}(t) \in \bigcap_{j=1}^n S_j\ \forall t \geq 0$ for $S_j$ in \eqref{eqn:state_constraints_full}.
\end{corollary}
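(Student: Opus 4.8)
The plan is to obtain Corollary~\ref{cor:full_solution} as an immediate consequence of Theorem~\ref{thm:full_solution}, Lemma~\ref{lem:key_technical_lemma}, and the set inclusion recorded just after \eqref{eqn:full_composite_safe_set}, in exactly the same way Corollary~\ref{cor:simplified_solution} was deduced from Theorem~\ref{thm:simplified_solution}. There is no new analysis to do: all of the work sits in Theorem~\ref{thm:full_solution} (the analytic construction of the $2n$-constraint filter and its implementability) and in Lemma~\ref{lem:key_technical_lemma} (forward-invariance from an implementable safety filter).

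Concretely, I would proceed in three steps. First, invoke Theorem~\ref{thm:full_solution}: with the parameters chosen as in Section~\ref{subsec:full_parameter_selection} and the $b_j, c_j$ set as in that theorem, \eqref{eqn:safety_filter} with $N = 2n$ is an \emph{implementable} safety filter for \eqref{eqn:plant_scalar_integrator} with respect to $\bar{S}$ of \eqref{eqn:full_composite_safe_set}; in particular the optimization problem \eqref{eqn:safety_filter} is feasible at every $\vec{x} \in \bar{S}$, so the hypotheses of Lemma~\ref{lem:key_technical_lemma} hold for the collection $\{\ubar{S}_{ji}, \bar{S}_{ji}\}$ whose intersection is $\bar{S}$. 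Second, apply Lemma~\ref{lem:key_technical_lemma}: choosing $u(t)$ as the solution of \eqref{eqn:safety_filter} for any $u_{nom}(t)$ renders the closed loop safe with respect to $\bar{S}$, so $\bar{S}$ is forward-invariant, and hence $\vec{x}(0) \in \bar{S}$ implies $\vec{x}(t) \in \bar{S}$ for all $t \geq 0$. Third, combine this with the inclusion $\bar{S} \subseteq \bigcap_{j=1}^n S_j$ to conclude $\vec{x}(t) \in \bigcap_{j=1}^n S_j$ for all $t \geq 0$.

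The only point that deserves a line of justification is the inclusion $\bar{S} \subseteq \bigcap_{j=1}^n S_j$, which is noted immediately after \eqref{eqn:full_composite_safe_set}: for each $j$, the relative-degree-$1$ sets satisfy $\ubar{S}_{j1} = \{\vec{x} : x_j - \ubar{x}_j \geq \ubar{\epsilon}_{j1}^2/\ubar{\gamma}_{j1}^2\} \subseteq \{x_j \geq \ubar{x}_j\}$ and $\bar{S}_{j1} = \{\vec{x} : \bar{x}_j - x_j \geq \bar{\epsilon}_{j1}^2/\bar{\gamma}_{j1}^2\} \subseteq \{x_j \leq \bar{x}_j\}$, since the $\epsilon$-margins are nonnegative, so $\ubar{S}_{j1} \cap \bar{S}_{j1} \subseteq S_j$; intersecting over all $i$ and $j$ gives $\bar{S} \subseteq \bigcap_{j=1}^n (\ubar{S}_{j1} \cap \bar{S}_{j1}) \subseteq \bigcap_{j=1}^n S_j$. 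I do not anticipate any genuine obstacle in this corollary — if anything is delicate it is internal to Theorem~\ref{thm:full_solution} (in particular that the square-root terms in \eqref{eqn:full_lower_function_recursion}--\eqref{eqn:full_upper_function_recursion} stay well-defined and smooth on $\bar{S}$, which is exactly why the relative-degree-$i$ sets in \eqref{eqn:full_safe_subsets} are offset by the positive constants $\epsilon_{ji}^2/\gamma_{ji}^2$), and that result is assumed here.
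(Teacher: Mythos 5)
Your proposal is correct and follows essentially the same route as the paper: invoke Theorem \ref{thm:full_solution} for implementability, Lemma \ref{lem:key_technical_lemma} for forward-invariance of $\bar{S}$, and then the inclusion $\bar{S} \subset \bigcap_{j=1}^n S_j$. Your extra line verifying that inclusion via the relative-degree-$1$ sets is a harmless elaboration of what the paper simply asserts after \eqref{eqn:full_composite_safe_set}.
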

\begin{proof}
    Analogous to the proof of Corollary \ref{cor:simplified_solution}.
\end{proof}
\section{Extension to the Multi-Input Setting} \label{sec:extension}

In this section, we discuss an extension of our main problem statement to the multi-input setting. For simplicity, we only consider simultaneous state constraints with the same relative degree, not with varying relative degrees as in Problem \ref{plm:full}.

\subsection{Extended Problem Statement} \label{subsec:extended_problem_statement}

In this section, we consider control of the $m$-dimensional $n$th-order integrator
\begin{equation} \label{eqn:plant_md_integrator}
    \dot{\vec{x}}_1 = \vec{x}_2, \dot{\vec{x}}_2 = \vec{x}_3, \cdots, \dot{\vec{x}}_{n-1} = \vec{x}_n, \dot{\vec{x}}_n = \vec{u}
\end{equation}
where $\vec{x}_j, \vec{u} \in \bbR^m$, subject to the input constraint
\begin{equation} \label{eqn:md_input_constraint}
    \vec{u} \in \calU \subset \bbR^m,\; \calU = \{\vec{u} \in \bbR^m : \|\vec{u}\| \leq \bar{u}\}
\end{equation}
where $\bar{u} > 0$. Let $\vec{x} = [\vec{x}_1^\top, \cdots, \vec{x}_n^\top]^\top$. The extended problem statement is as follows:
\begin{problem}[Extended] \label{plm:extended}
    Determine a $\vec{u}(t)$ for the plant in \eqref{eqn:plant_md_integrator}-\eqref{eqn:md_input_constraint} such that the state constraints
    \begin{equation} \label{eqn:md_state_constraints}
        \begin{gathered}
            \vec{x} \in S_1 \cap S_2 \cap \cdots \cap S_p \subset \bbR^{nm}, \\
            S_k = \{\vec{x} \in \bbR^{nm} : \vec{r}_k^\top\vec{x}_1 + s_k \geq 0\}
        \end{gathered}
    \end{equation}
    are forward-invariant, where $\vec{r}_k$ and $s_k$ are such that $S_1 \cap S_2 \cap \cdots \cap S_p$ is nonempty and has positive $nm$-dimensional volume.
\end{problem}
Note that $S_1 \cap S_2 \cap \cdots \cap S_p$ is closed by definition, but we do not require it to be bounded.

\subsection{Solution Setup for Problem \ref{plm:extended}}

We now present our approach to constructing a safety filter for Problem \ref{plm:extended}. The problem statement now contains $p$ simultaneous state constraints, so we now consider the following $p$ series of recursively-defined functions. For each $k \in \{1, \dots, p\}$, define:
\begin{subequations}
    \begin{align}
        h_{k1}(\vec{x}) &= \vec{r}_k^\top\vec{x}_1 + s_k, \; \Delta_{k1} = 0, \; \gamma_{k0} = 0, \label{eqn:extended_h_1} \\
        h_{ki}(\vec{x}) &= \vec{r}_k^\top\vec{x}_i + \gamma_{k(i-1)}\sqrt{h_{k(i-1)}} - \frac{\gamma_{k(i-2)}^2}{2} - \epsilon_{k(i-1)}\ \forall i \in \{2, \dots, n\}, \label{eqn:extended_h_i} \\
        \Delta_{ki}(\vec{x}) &= \frac{\gamma_{k(i-1)}}{2\sqrt{h_{k(i-1)}}}(h_{ki} + \Delta_{k(i-1)} + \epsilon_{k(i-1)})\ \forall i \in \{2, \dots, n-1\} \label{eqn:extended_Delta_i}
    \end{align}
\end{subequations}
with some constant parameters $\gamma_{ki}, \epsilon_{ki} > 0$. Then, consider the relative-degree-$i$ safe sets
\begin{equation} \label{eqn:extended_safe_subsets}
    S_{ki} = \{\vec{x} \in \bbR^n : h_{ki}(\vec{x}) \geq \epsilon_{ki}^2/\gamma_{ki}^2\}
\end{equation}
and the composite safe set
\begin{equation} \label{eqn:extended_composite_safe_set}
    \bar{S} = \bigcap_{k=1}^p\bigcap_{i=1}^{n} S_{ki},
\end{equation}
and note that $\bar{S} \subset \bigcap_{k=1}^p S_k$ for $S_k$ defined in \eqref{eqn:md_state_constraints}.

The corresponding safety filter will consist of \eqref{eqn:safety_filter_cost}-\eqref{eqn:safety_filter_input_constraint} with $N = p$, and for each $k \in \{1, \dots, p\}$,
\begin{subequations}
    \begin{align}
        \vec{b}_k =&\ \vec{r}_k, \label{eqn:extended_filter_b} \\
        c_k =&\ \gamma_{kn}\sqrt{h_{kn}} - \frac{\gamma_{k(n-1)}^2}{2} - \epsilon_{kn} + \frac{\gamma_{k(n-1)}}{2\sqrt{h_{k(n-1)}}}(h_{kn} + \Delta_{k(n-1)} + \epsilon_{k(n-1)}). \label{eqn:extended_filter_c}
    \end{align}
\end{subequations}

\subsection{Discussion of Implementability} \label{subsec:extended_discussion}

In order to realize an implementable safety filter with $N = p$ constraints, we need to choose the $2np$ parameters $\gamma_{ki}, \epsilon_{ki}$ appropriately. Unfortunately, this is less straightforward than parameter selection for Problem \ref{plm:full}. Since Problem \ref{plm:full} considered the case where the input was a scalar, in order to guarantee implementability of the safety filter, it was sufficient to merely consider pairwise feasibility of the constraints on $u$: every upper bound on $u$ had to be higher than every lower bound on $u$. Considering constraints pairwise led to a set of conditions on the parameters $\gamma_i$ and $\epsilon_i$ in \eqref{eqn:gamma_1_lower_limit}-\eqref{eqn:gamma_n-1_upper_limit}, and it was always possible to choose values for $\gamma_i$ and $\epsilon_i$ satisfying all conditions.

In the general multi-dimensional setting, merely considering constraints pairwise is no longer sufficient.
There are two special cases where it is straightforward to obtain sufficient conditions for implementability:
\begin{enumerate}
    \item In the setting where there exists a unit vector $\hat{u} \in \bbR^m$ satisfying $\hat{u}^\top\vec{r}_k > 0\ \forall k \in \{1, \dots, p\}$, one can show that choosing $\gamma_{k(n-1)} \leq \sqrt{2\bar{u}\hat{u}^\top\vec{r}_k}\ \forall k$ results in an implementable safety filter, as $\vec{u} = \bar{u}\hat{u}$ becomes a feasible solution whenever $\vec{x} \in \bar{S}$. Note that this corresponds to the setting where $\bigcap_{k=1}^p S_k$ is not bounded, and is equivalent to the assumption of non-conflicting CBFs in \cite{breeden2023compositions}.
    \item In the setting where the safe set $\bigcap_{k=1}^p S_k$ is a rectangular prism in $m$ dimensions, the problem can be decomposed into $m$ simultaneous instances of Problem \ref{plm:full}.
\end{enumerate}

For the general problem statement in Problem \ref{plm:extended}, we present the following conjecture, whose proof is a matter for future work:
\begin{conjecture} \label{cnj:extended_solution}
    Consider the problem statement in Problem \ref{plm:extended}. There exist constant values $\gamma_{ki}, \epsilon_{ki} > 0\ \forall k, i$ such that \eqref{eqn:safety_filter_cost}-\eqref{eqn:safety_filter_input_constraint} with $N = p$ and $\vec{b}_k, c_k$ given by \eqref{eqn:extended_filter_b}-\eqref{eqn:extended_filter_c} is an implementable safety filter for the plant in \eqref{eqn:plant_md_integrator} with respect to $\bar{S}$ in \eqref{eqn:extended_composite_safe_set}.
\end{conjecture}

\section{Simulation: Quadrotor Safety} \label{sec:simulations}

We now demonstrate the efficacy of our approach in extending to Problem \ref{plm:extended} through simulation of an input-constrained quadrotor drone navigating an hourglass-shaped room.

\subsection{Quadrotor Modeling} \label{subsec:quadrotor_modeling}

In the following simulations, we make use of the linearized 6-DOF quadrotor model in \cite{Annaswamy2023ACRL} given by
\begin{gather}
    \dot{x} = v_x,\; \dot{v}_x = g\theta,\; \dot{\theta} = q,\; \dot{q} = \frac{1}{I_y}\tau_y, \label{eqn:quadrotor_x_motion} \\
    \dot{y} = v_y,\; \dot{v}_y = -g\phi,\; \dot{\phi} = p,\; \dot{p} = \frac{1}{I_x}\tau_x, \label{eqn:quadrotor_y_motion} \\
    \dot{z} = v_z,\; \dot{v}_z = \frac{1}{m}F, \label{eqn:quadrotor_z_motion} \\
    \dot{\psi} = r,\; \dot{r} = \frac{1}{I_z}\tau_z \label{eqn:quadrotor_psi_motion}
\end{gather}
where $(x, y, z)$ is the position of the COM, $(v_x, v_y, v_z)$ is the velocity of the COM, $(\phi, \theta, \psi)$ are the roll, pitch, and yaw Euler angles, $(p, q, r)$ is the angular velocity, $F$ is the net vertical force on the drone, and $(\tau_x, \tau_y, \tau_z)$ are the net torques about each axis. $F$, $\tau_x$, $\tau_y$, and $\tau_z$ are related to the normalized thrusts from each rotor (thrust minus $\frac{mg}{4}$), $u_i$, as
\begin{equation} \label{eqn:quadrotor_forces_thrusts}
    \begin{bmatrix} F \\ \tau_y \\ \tau_x \\ \tau_z \end{bmatrix} = \begin{bmatrix} 1 & 1 & 1 & 1 \\ 0 & L & 0 & -L \\ L & 0 & -L & 0 \\ \nu & -\nu & \nu & -\nu \end{bmatrix}\begin{bmatrix} u_1 \\ u_2 \\ u_3 \\ u_4 \end{bmatrix} = B_2\vec{u}.
\end{equation}
Let $\vec{x} = [x, y, y, \theta, \phi, \psi, v_x, v_y, v_z, q, p, r]^\top$. Then, we can write
\begin{equation}
    \dot{\vec{x}} = A\vec{x} + B_1B_2\vec{u}
\end{equation}
where $A$ and $B_1$ are derived in a straightforward manner from \eqref{eqn:quadrotor_x_motion}-\eqref{eqn:quadrotor_psi_motion}. We further suppose that the quadrotor is subject to input saturation on each $u_i$ individually, i.e.
\begin{equation}
    |u_i| \leq \bar{u},\; i \in \{1, 2, 3, 4\}
\end{equation}
for some $\bar{u} > 0$.

\subsection{Simulation Scenario}

For this simulation, we consider an hourglass-shaped room (see Figure \ref{fig:traj_1_top_view}), and suppose that the drone must travel from its initial position to a target position while avoiding the walls. This safe set is non-convex and thus cannot be immediately expressed as the intersection of multiple half-planes as in Problem \ref{plm:extended}. Instead, we express the safe set as the union of multiple convex ``clusters" with a separate safety filter for each cluster. At each time $t$, then, $\vec{u}(t)$ is chosen as the closest solution to $\vec{u}_{nom}(t)$ among the solutions to all safety filters for which the system is currently safe. If each of the safety filters satisfies Conjecture \ref{cnj:extended_solution} for its respective cluster, then the closed-loop system will remain safe, as $\vec{u}(t)$ will always ensure forward-invariance of at least one cluster and thus forward-invariance of the overall safe set.

\subsection{Control and Safety Filter Design}

We first design a nominal tracking controller to take the drone from its initial position to a target location as follows. We first design a stabilizing feedback gain, $K_{lqr}$, using LQR on the dynamics $(A, B_1B_2)$. Then, consider the output vector
\begin{equation}
    \vec{y} = [x, y, z, \psi]^\top = C\vec{x}.
\end{equation}
In order to track a desired setpoint $\vec{y}_d$, we apply the nominal control input
\begin{equation}
    \vec{u}_{nom}(t) = K_{lqr}\vec{x}(t) + T_{DC}^{-1}\vec{y}_d
\end{equation}
where $T_{DC}$ is the DC gain of the closed-loop transfer function matrix
\begin{equation}
    T(s) = C(sI - A - B_1B_2K_{lqr})^{-1}B_1B_2.
\end{equation}

The quadrotor model in Section \ref{subsec:quadrotor_modeling} is a 4th-order integrator in $x$ and $y$ and a 2nd-order integrator in $z$ and $\psi$, and the safety constraints are only in $x$ and $y$. Thus, to manipulate the plant into the form required for Problem \ref{plm:extended}, we consider only the dynamics in the $x$-$y$ plane. Defining $\vec{x}_1 = [x, y]^\top$, $\vec{x}_2 = [v_x, v_y]^\top$, $\vec{x}_3 = [g\theta, -g\phi]^\top$, and $\vec{x}_4 = [gq, -gp]^\top$, the dynamics of the constrained states are now given by
\begin{equation}
    \dot{\vec{x}}_1 = \vec{x}_2,\; \dot{\vec{x}}_2 = \vec{x}_3,\; \dot{\vec{x}}_3 = \vec{x}_4,\; \dot{\vec{x}}_4 = B_3\vec{u}
\end{equation}
where $B_3 \in \bbR^{2 \times 4}$ can be derived from \eqref{eqn:quadrotor_forces_thrusts}.

The safe set consists of two clusters with three state constraints each. We design one safety filter for each cluster as follows. Each cluster consists of constraints of the form in \eqref{eqn:md_state_constraints} with $m = 2$ and $p = 3$. For each $k$, functions $h_{k1}, \dots, h_{k4}$ and $\Delta_{k2}, \Delta_{k3}$ are defined as in \eqref{eqn:extended_h_1}-\eqref{eqn:extended_Delta_i}, and the parameters are chosen as $\gamma_{ki} = \gamma_i\ \forall k$, $\epsilon_{ki} = \epsilon_i\ \forall k$ for simplicity. The safety filter consists of \eqref{eqn:safety_filter_cost}-\eqref{eqn:safety_filter_input_constraint} with $N = 3$ and, for each $k \in \{1, 2, 3\}$,
\begin{subequations}
    \begin{align}
        \vec{b}_k =&\ B_3^\top\vec{r}_k, \\
        c_k =&\ \gamma_{4}\sqrt{h_{k4}} - \frac{\gamma_{3}^2}{2} - \epsilon_{4} + \frac{\gamma_{3}}{2\sqrt{h_{k3}}}(h_{k4} + \Delta_{k3} + \epsilon_{3}).
    \end{align}
\end{subequations}
Since only two of the three state constraint functions $h_{k1}$ can ever go to zero simultaneously, the parameters $\gamma_i, \epsilon_i$ were tuned by making the constraints pairwise compatible: for each pair $(k, \ell)$, the parameters were tuned such that
\begin{equation}
    \frac{\gamma_{3}^2}{2} + \epsilon_4 \leq \min\{\bar{u}\hat{u}^\top B_3^\top\vec{r}_k, \bar{u}\hat{u}^\top B_3^\top\vec{r}_\ell\}
\end{equation}
for a $\hat{u} \in \bbR^2, \|\hat{u}\| = 1$ such that $\hat{u}^\top B_3^\top\vec{r}_k > 0$ and $\hat{u}^\top B_3^\top\vec{r}_\ell > 0$. This tuning ensured that $\bar{u}\hat{u}$ would be a feasible input in the event that either $h_{k1}$, $h_{\ell1}$, or both went to zero.

\subsection{Results}

Figures \ref{fig:traj_1_top_view}-\ref{fig:traj_1_thrusts} show the results of a simulation where the start and goal locations are in the safe set, but the straight path connecting them is not. As expected, the quadrotor tracks the goal as well as it can while respecting its input limitations and avoiding the walls of the room. Appendix \ref{app:simulations} presents additional simulation results, including simulations with severe input limits.


\begin{figure}
    \centering
    \begin{subfigure}[t]{0.48\textwidth}
        \centering
        \includegraphics[width=\textwidth]{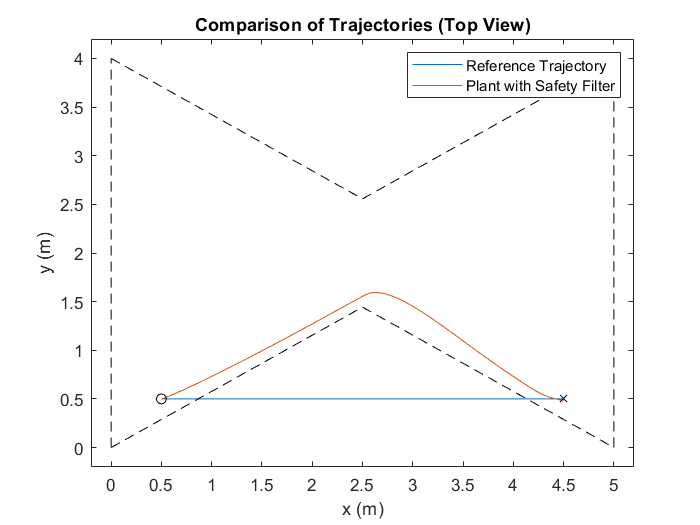}
        \caption{Top-down view of the plant's trajectory under the safety filter. The plant reaches the target while staying inside the safe set (dashed black lines).}
        \label{fig:traj_1_top_view}
    \end{subfigure}
    \hfill
    \begin{subfigure}[t]{0.48\textwidth}
        \centering
        \includegraphics[width=\textwidth]{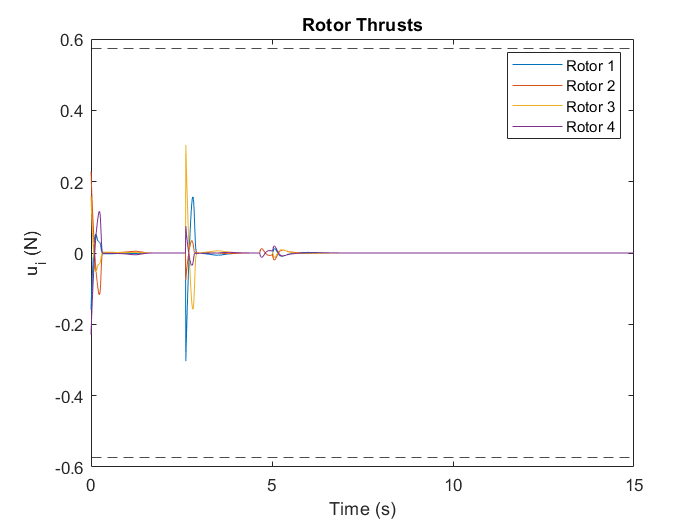}
        \caption{Individual rotor thrusts resulting from the safety filter. All rotors obey the input limits (dashed black lines).}
        \label{fig:traj_1_thrusts}
    \end{subfigure}
\end{figure}
\section{Conclusions and Future Work} \label{sec:conclusions_future_work}

This work proposed a new method of synthesizing analytical safety filters for multi-integrator plants of any relative degree under input saturation and multiple simultaneous state constraints. Explicit methods were given for both building the safety filter and tuning it for implementability in the single-input setting. Additionally, an extension to the multi-input setting was discussed, including special cases under which the single-input approach can be straightforwardly applied and a conjecture that it can be applied in the more general extension as well.

While the method proposed in this work is powerful in its simplicity - the safety filter consists of a collection of analytical constraints on the control input with no online optimization or explicit set construction required - the assumption of a multi-integrator plant limits its applicability. Thus, future work may revolve around expanding the scope of this approach. One possible expansion could be allowing the input constraints to vary with the state, as this may enable application to arbitrary plants under nonlinear dynamic inversion.

\bibliographystyle{IEEEtran}
\bibliography{references}

\begin{thebibliography}{10}
\providecommand{\url}[1]{#1}
\csname url@rmstyle\endcsname
\providecommand{\newblock}{\relax}
\providecommand{\bibinfo}[2]{#2}
\providecommand\BIBentrySTDinterwordspacing{\spaceskip=0pt\relax}
\providecommand\BIBentryALTinterwordstretchfactor{4}
\providecommand\BIBentryALTinterwordspacing{\spaceskip=\fontdimen2\font plus
\BIBentryALTinterwordstretchfactor\fontdimen3\font minus \fontdimen4\font\relax}
\providecommand\BIBforeignlanguage[2]{{%
\expandafter\ifx\csname l@#1\endcsname\relax
\typeout{** WARNING: IEEEtran.bst: No hyphenation pattern has been}%
\typeout{** loaded for the language `#1'. Using the pattern for}%
\typeout{** the default language instead.}%
\else
\language=\csname l@#1\endcsname
\fi
#2}}

\bibitem{ames2019CBF}
A.~D. Ames, S.~Coogan, M.~Egerstedt, G.~Notomista, K.~Sreenath, and P.~Tabuada, ``Control barrier functions: Theory and applications,'' in \emph{2019 18th European Control Conference (ECC)}, 2019, pp. 3420--3431.

\bibitem{nguyen2016walkingCBFs}
Q.~Nguyen, A.~Hereid, J.~W. Grizzle, A.~D. Ames, and K.~Sreenath, ``3d dynamic walking on stepping stones with control barrier functions,'' in \emph{2016 IEEE 55th Conference on Decision and Control (CDC)}, 2016, pp. 827--834.

\bibitem{gunter2022automatedvehicles}
G.~Gunter, M.~Nice, M.~Bunting, J.~Sprinkle, and D.~B. Work, ``Experimental testing of a control barrier function on an automated vehicle in live multi-lane traffic,'' in \emph{2022 2nd Workshop on Data-Driven and Intelligent Cyber-Physical Systems for Smart Cities Workshop (DI-CPS)}, 2022, pp. 31--35.

\bibitem{alan2023automatedvehicles}
A.~Alan, A.~J. Taylor, C.~R. He, A.~D. Ames, and G.~Orosz, ``Control barrier functions and input-to-state safety with application to automated vehicles,'' \emph{IEEE Transactions on Control Systems Technology}, vol.~31, no.~6, pp. 2744--2759, 2023.

\bibitem{pmlr-v229-zhang23h}
\BIBentryALTinterwordspacing
S.~Zhang, K.~Garg, and C.~Fan, ``Neural graph control barrier functions guided distributed collision-avoidance multi-agent control,'' in \emph{Proceedings of The 7th Conference on Robot Learning}, ser. Proceedings of Machine Learning Research, J.~Tan, M.~Toussaint, and K.~Darvish, Eds., vol. 229.\hskip 1em plus 0.5em minus 0.4em\relax PMLR, 06--09 Nov 2023, pp. 2373--2392. [Online]. Available: \url{https://proceedings.mlr.press/v229/zhang23h.html}
\BIBentrySTDinterwordspacing

\bibitem{gurriet2018}
T.~Gurriet, A.~Singletary, J.~Reher, L.~Ciarletta, E.~Feron, and A.~Ames, ``Towards a framework for realizable safety critical control through active set invariance,'' in \emph{2018 ACM/IEEE 9th International Conference on Cyber-Physical Systems (ICCPS)}, 2018, pp. 98--106.

\bibitem{mitchell2005}
I.~Mitchell, A.~Bayen, and C.~Tomlin, ``A time-dependent hamilton-jacobi formulation of reachable sets for continuous dynamic games,'' \emph{IEEE Transactions on Automatic Control}, vol.~50, no.~7, pp. 947--957, 2005.

\bibitem{xu2018correctness}
X.~Xu, J.~W. Grizzle, P.~Tabuada, and A.~D. Ames, ``Correctness guarantees for the composition of lane keeping and adaptive cruise control,'' \emph{IEEE Transactions on Automation Science and Engineering}, vol.~15, no.~3, pp. 1216--1229, 2018.

\bibitem{gurriet2020scalable}
T.~Gurriet, M.~Mote, A.~Singletary, P.~Nilsson, E.~Feron, and A.~D. Ames, ``A scalable safety critical control framework for nonlinear systems,'' \emph{IEEE Access}, vol.~8, pp. 187\,249--187\,275, 2020.

\bibitem{molnar2023safety}
T.~G. Molnar and A.~D. Ames, ``Safety-critical control with bounded inputs via reduced order models,'' in \emph{2023 American Control Conference (ACC)}, 2023, pp. 1414--1421.

\bibitem{cortez2020correct}
W.~S. Cortez and D.~V. Dimarogonas, ``Correct-by-design control barrier functions for euler-lagrange systems with input constraints,'' in \emph{2020 American Control Conference (ACC)}, 2020, pp. 950--955.

\bibitem{wu2015safety}
G.~Wu and K.~Sreenath, ``Safety-critical and constrained geometric control synthesis using control lyapunov and control barrier functions for systems evolving on manifolds,'' in \emph{2015 American Control Conference (ACC)}, 2015, pp. 2038--2044.

\bibitem{agrawal2021ICCBFs}
D.~R. Agrawal and D.~Panagou, ``Safe control synthesis via input constrained control barrier functions,'' in \emph{2021 60th IEEE Conference on Decision and Control (CDC)}, 2021, pp. 6113--6118.

\bibitem{xiao2019HOCBF}
W.~Xiao and C.~Belta, ``Control barrier functions for systems with high relative degree,'' in \emph{2019 IEEE 58th Conference on Decision and Control (CDC)}, 2019, pp. 474--479.

\bibitem{doeser2020invariant}
L.~Doeser, P.~Nilsson, A.~D. Ames, and R.~M. Murray, ``Invariant sets for integrators and quadrotor obstacle avoidance,'' in \emph{2020 American Control Conference (ACC)}, 2020, pp. 3814--3821.

\bibitem{breeden2023compositions}
J.~Breeden and D.~Panagou, ``Compositions of multiple control barrier functions under input constraints,'' in \emph{2023 American Control Conference (ACC)}, 2023, pp. 3688--3695.

\bibitem{Annaswamy2023ACRL}
A.~M. Annaswamy, A.~Guha, Y.~Cui, S.~Tang, P.~A. Fisher, and J.~E. Gaudio, ``Integration of adaptive control and reinforcement learning for real-time control and learning,'' \emph{IEEE Transactions on Automatic Control}, pp. 1--16, 2023.

\end{thebibliography}

\appendix
\section{Proof of Lemma \ref{lem:key_technical_lemma}} \label{app:technical_lemma_proof}

Suppose that $\bar{S}$ is not rendered forward-invariant. Then, $\vec{x}(0) \in \bar{S}$ and $\exists t_1 > 0$ such that $\vec{x}(t_1) \notin \bar{S}$, which is equivalent to saying that there exist $\ell_1, \dots, \ell_q$ such that $h_{\ell_i}(\vec{x}(t_1)) < 0\ \forall i \in \{1, \dots, q\}$ for some $1 \leq q \leq p$. As $\vec{x}(0) \in \bar{S}$, we additionally know that $h_{\ell_i}(\vec{x}(0)) \geq 0\ \forall i \in \{1, \dots, q\}$. Now, from \eqref{eqn:prelim_plant}, we know that $\vec{x}(t)$ is continuous and differentiable, and by assumption, $h_{\ell_i}(\vec{x})$ is continuous and differentiable. Thus, $h_{\ell_i}(t)$ is continuous and differentiable. Therefore, if $h_{\ell_i}(0) \geq 0$ and $h_{\ell_i}(t_1) < 0$, then there must exist $\bar{t}_1, \dots, \bar{t}_q \in [0, t_1)$ such that $h_{\ell_i}(t) \geq 0\ \forall t < \bar{t}_i$, $h_{\ell_i}(\bar{t}_i) = 0$, and $\dot{h}_{\ell_i}(\bar{t}_i) < 0$ (i.e. $\bar{t}_i$ is the first time at which $h_{\ell_i}$ becomes negative). In particular, consider $\bar{t} = \min_i \bar{t}_i$. Then, $h_{\ell_i}(\bar{t}) \geq 0\ \forall i$ and there exists at least one $\ell_i$ such that $h_{\ell_i}(\bar{t}) = 0$ and $\dot{h}_{\ell_i}(\bar{t}) < 0$.

However, $h_{\ell_i}(\bar{t}) \geq 0\ \forall i \implies \vec{x}(\bar{t}) \in \bar{S}$. From Definitions \ref{def:safety_filter} and \ref{def:implementability}, if \eqref{eqn:safety_filter_cost}-\eqref{eqn:safety_filter_input_constraint} is an implementable safety filter for the input-constrained plant in \eqref{eqn:prelim_plant}-\eqref{eqn:input_constraint} with respect to $\bar{S}$, then we know that $\dot{h}_\ell(\bar{t}) \geq -\alpha_\ell(h_\ell(\bar{t}))\ \forall \ell$. In particular, for any $\ell_i$ such that $h_{\ell_i}(\bar{t}) = 0$, we have $\dot{h}_{\ell_i} \geq 0$. Thus, we reach a contradiction. \hfill $\Box$
\section{Derivation of CBF Recursion for Problem \ref{plm:simplified}} \label{app:ICCBF_derivation}

This appendix provides a derivation for \eqref{eqn:simplified_h_1}-\eqref{eqn:simplified_Delta_i}, and also serves to connect our approach to that of \cite{agrawal2021ICCBFs}.

The approach in \cite{agrawal2021ICCBFs} applies backstepping as in the usual approach to high-order CBFs (see e.g. \cite{xiao2019HOCBF}), but seeks a particular series of class $\calK_\infty$ functions $\alpha_1, \dots, \alpha_N$ ($N \geq n$) such that $h_N$ is an ICCBF (see Definition \ref{def:ICCBF}. However, it is unclear in \cite{agrawal2021ICCBFs} how to choose $\alpha_1, \dots, \alpha_N$ such that $N$ is finite. Our approach adds two innovations to make the problem analytically tractable and allow $N$ to be finite:
\begin{enumerate}
    \item Rather than searching over the full space of $\calK_\infty$ functions, we seek functions of the particular form $\alpha_i(h_i) = \gamma_ih_i^{k_i}$ for some $\gamma_i, k_i > 0$.
    \item We introduce a nonnegative offset $\Delta_i(t) \geq 0$ into the inequality for each $h_i$, which will prevent us from needing to carry a polynomially-growing number of terms all the way through the derivation.
\end{enumerate}
In summary, for each CBF $h_i(\vec{x})$, we will seek to ensure that
\begin{equation} \label{eqn:ICCBF_safety_filter_archetype}
    \dot{h}_i \geq -\gamma_ih_i^{k_i} + \Delta_i,
\end{equation}
which, if satisfied with $\gamma_i, k_i > 0$ and $\Delta_i(t) \geq 0\ \forall t \geq 0$, will guarantee that $h_i(t) \geq 0\ \forall t \geq 0$.

Consider the 1st-order CBF candidate $h_1(\vec{x})$ as in \eqref{eqn:simplified_h_1}, which is chosen such that
\begin{equation}
    S_1 := S = \{\vec{x} \in \bbR^n : h_1(\vec{x}) \geq 0\}
\end{equation}
for $S$ in \eqref{eqn:state_constraints_simple}. In order to render $S$ forward-invariant, we seek to ensure
\begin{equation} \label{eqn:h1_filter_1}
    \dot{h}_1 \geq -\gamma_1h_1^{k_1} + \Delta_1
\end{equation}
for some $\gamma_1, k_1 > 0$ and $\Delta_1(t) \geq 0$. For simplicity, we choose $\Delta_1 = 0$. If $n = 1$, then $\dot{h}_1 = u$ and \eqref{eqn:h1_filter_1} becomes
\begin{equation} \label{eqn:h1_filter_2}
    u \geq -\gamma_1h_1^{k_1}.
\end{equation}
If a safety filter with \eqref{eqn:h1_filter_2} is applied to $u$, it will ensure that $h_1(t) \geq 0\ \forall t \geq 0$, which in turn will ensure that the right-hand side of the inequality is always nonpositive and thus that $u = 0$ is always a valid solution to the filter. Thus, for the $n = 1$ case, $h_1$ is always an ICCBF, and a safety filter with \eqref{eqn:h1_filter_2} is implementable.

Now suppose that $n > 1$. Then, \eqref{eqn:h1_filter_1} is equivalent to
\begin{equation} \label{eqn:h1_filter_3}
    x_2 + \gamma_1h_1^{k_1} \geq 0.
\end{equation}
We thus define the second-order CBF candidate as
\begin{subequations}
    \begin{align}
        h_2(\vec{x}) &= \dot{h}_1 + \gamma_1h_1^{k_1} - \Delta_1 \label{eqn:h2_definition} \\
        &= x_2 + \gamma_1h_1^{k_1} \label{eqn:h2_rewritten}
    \end{align}
\end{subequations}
with corresponding 2nd-order safe set $S_2 = \{\vec{x} \in \bbR^n : h_2(\vec{x}) \geq 0\}$. If $h_2(\vec{x}(0)) \geq 0$ and $S_2$ is rendered forward-invariant, then by definition, \eqref{eqn:h1_filter_1} is satisfied for all $t \geq 0$, rendering $S_1$ forward-invariant. We seek to render $S_2$ forward-invariant by satisfying the inequality
\begin{equation} \label{eqn:h2_filter_1}
    \dot{h}_2 \geq -\gamma_2h_2^{k_2} + \Delta_2.
\end{equation}
for some $\gamma_2, k_2 > 0$ and $\Delta_2(t) \geq 0$. Differentiating \eqref{eqn:h2_rewritten} and using \eqref{eqn:h2_definition} to substitute in for $\dot{h}_1$, we can rewrite \eqref{eqn:h2_filter_1} as
\begin{equation} \label{eqn:h2_filter_2}
    \dot{x}_2 + \underbrace{\gamma_1k_1h_1^{k_1-1}(h_2 + \Delta_1)}_{c_1} - \underbrace{\gamma_1^2k_1h_1^{2k_1-1}}_{c_2} \geq -\underbrace{\gamma_2h_2^{k_2}}_{c_3} + \Delta_2.
\end{equation}
If this inequality is satisfied for all $t \geq 0$, it will ensure $h_1(t), h_2(t) \geq 0\ \forall t \geq 0$. Crucially, if this occurs, $c_1$, $c_2$, and $c_3$ are all sign-definite, and furthermore their signs are such that $c_1$ and $c_3$ strictly help in satisfying the inequality and $c_2$ strictly hurts in satisfying the inequality.
As $c_2$ is strictly harmful, we want to prevent it from becoming arbitrarily large if $h_1$ goes to zero or infinity. We can do this by choosing $k_1 = \frac{1}{2}$ so that $c_2 = \frac{\gamma_1^2}{2}$.

If $n = 2$, then $\dot{x}_2 = u$ and, setting $\Delta_2 = 0$, \eqref{eqn:h2_filter_2} is equivalent to
\begin{equation} \label{eqn:h2_filter_3}
    u \geq -\gamma_2h_2^{k_2} - \frac{\gamma_1}{2\sqrt{h_1}}(h_2 + \Delta_1) + \frac{\gamma_1^2}{2}.
\end{equation}
If this inequality is satisfied for all $t \geq 0$, then all three terms on the left-hand side are sign-definite, and the only harmful term is the constant $\frac{\gamma_1^2}{2}$. Furthermore, we have free choice of $\gamma_1 > 0$. Therefore, by choosing any $\gamma_1 \leq \sqrt{2\bar{u}}$, we can ensure that $u = \bar{u}$ always satisfies the inequality. Thus, for the $n = 2$ case with $k_1 = \frac{1}{2}$, $\gamma_1 \leq \sqrt{2\bar{u}}$, and $\Delta_1 = \Delta_2 = 0$, $h_2$ is an ICCBF, and a safety filter with \eqref{eqn:h2_filter_3} is implementable.

We will derive one more level before extrapolating to the full $n$th-order case. Suppose now that $n > 2$. As $c_1$ in \eqref{eqn:h2_filter_2} is strictly nonnegative, we can choose
\begin{equation} \label{eqn:Delta2}
    \Delta_2 = c_1 = \frac{\gamma_1}{2\sqrt{h_1}}(h_2 + \Delta_1)
\end{equation}
to cancel it out and avoid having to keep track of a growing number of terms. Then, \eqref{eqn:h2_filter_1} and \eqref{eqn:h2_filter_2} are equivalent to
\begin{equation} \label{eqn:h2_filter_4}
    x_3 + \gamma_2h_2^{k_2} - \frac{\gamma_1^2}{2} \geq 0.
\end{equation}
We thus define the third-order CBF candidate as
\begin{subequations}
    \begin{align}
        h_3(\vec{x}) &= \dot{h}_2 + \gamma_2h_2^{k_2} - \Delta_2 \label{eqn:h3_definition} \\
        &= x_3 + \gamma_2h_2^{k_2} - \frac{\gamma_1^2}{2} \label{eqn:h3_rewritten}
    \end{align}
\end{subequations}
with corresponding 3rd-order safe set $S_3 = \{\vec{x} \in \bbR^n : h_3(\vec{x}) \geq 0\}$. If $h_2(\vec{x}(0)), h_3(\vec{x}(0)) \geq 0$ and $S_3$ is rendered forward-invariant, then by definition, \eqref{eqn:h2_filter_1} is satisfied, rendering $S_2$ and thus $S_1$ forward-invariant. We seek to render $S_3$ forward-invariant by satisfying the inequality
\begin{equation} \label{eqn:h3_filter_1}
    \dot{h}_3 \geq -\gamma_3h_3^{k_3} + \Delta_3
\end{equation}
for some $\gamma_2, k_2 > 0$ and $\Delta_2(t) \geq 0$. Differentiating \eqref{eqn:h3_rewritten} and using \eqref{eqn:h3_definition} to substitute in for $\dot{h}_2$, we can rewrite \eqref{eqn:h3_filter_1} as
\begin{equation} \label{eqn:h3_filter_2}
    \dot{x}_3 + \underbrace{\gamma_2k_2h_2^{k_2-1}(h_3 + \Delta_2)}_{d_1} - \underbrace{\gamma_2^2k_2h_2^{2k_2-1}}_{d_2} \geq -\underbrace{\gamma_3h_3^{k_3}}_{d_3} + \Delta_3.
\end{equation}
If this inequality is satisfied for all $t \geq 0$, it will ensure $h_1(t), h_2(t), h_3(t) \geq 0\ \forall t \geq 0$. Crucially, if this occurs, $d_1$, $d_2$, and $d_3$ are all sign-definite, and furthermore their signs are such that $d_1$ and $d_3$ strictly help in satisfying the inequality and $d_2$ strictly hurts in satisfying the inequality. As before, we can ensure that $d_2$ does not become arbitrarily harmful by choosing $k_2 = \frac{1}{2}$ so that $d_2 = \frac{\gamma_2^2}{2}$. If $n = 3$, then $\dot{x}_3 = u$ and, setting $\Delta_3 = 0$, \eqref{eqn:h3_filter_1} is equivalent to
\begin{equation} \label{eqn:h3_filter_2}
    u \geq -\gamma_3h_3^{k_3} - \frac{\gamma_2}{2\sqrt{h_2}}(h_3 + \Delta_2) + \frac{\gamma_2^2}{2}.
\end{equation}
As before, if $\gamma_2 \leq \sqrt{2\bar{u}}$, then $h_3$ is an ICCBF and a safety filter with \eqref{eqn:h3_filter_2} is implementable. Note that in the case of $n = 3$, we choose $k_1 = k_2 = \frac{1}{2}$, $0 < \gamma_2 \leq \sqrt{2\bar{u}}$, $\Delta_1 = \Delta_3 = 0$, and $\Delta_2$ according to \eqref{eqn:Delta2}, but we do not need any upper bound on $\gamma_1 > 0$. (This will no longer be true when we introduce multiple simultaneous constraints.)

We now extrapolate to the full $n$th-order case. Choose $k_i = \frac{1}{2}\ \forall i$. Then, defining $\gamma_0 = 0$, the $i$th-order CBF is given recursively by
\begin{subequations}
    \begin{align}
        h_1(\vec{x}) &= x_1 - \ubar{x}_1, \label{eqn:h1_definition_2} \\
        h_i(\vec{x}) &= \dot{h}_{i-1} + \gamma_{i-1}\sqrt{h_{i-1}} - \Delta_{i-1} \label{eqn:hi_definition} \\
        &= x_i + \gamma_{i-1}\sqrt{h_{i-1}} - \frac{\gamma_{i-2}^2}{2}\ \forall i \geq 2 \label{eqn:hi_rewritten}
    \end{align}
\end{subequations}
with corresponding $i$th-order safe set $S_i = \{\vec{x} \in \bbR^n : h_i(\vec{x}) \geq 0\}$, where \eqref{eqn:hi_definition} is the definition of $h_i(\vec{x})$ and \eqref{eqn:hi_rewritten} is given by choosing $\Delta_i$ recursively as
\begin{equation} \label{eqn:Deltai}
    \Delta_1 = 0, \; \Delta_i = \frac{\gamma_{i-1}}{2\sqrt{h_{i-1}}}(h_i + \Delta_{i-1})\ \forall i \geq 2.
\end{equation}
In order to ensure forward-invariance of $S_i$, we seek to ensure that
\begin{equation} \label{eqn:hi_filter_1}
    \dot{h}_i \geq -\gamma_i\sqrt{h_i} + \Delta_i.
\end{equation}

First, consider $i < n$. Differentiating \eqref{eqn:hi_rewritten} and using \eqref{eqn:hi_definition} to substitute in for $\dot{h}_i$, and using $\Delta_i$ given in \eqref{eqn:Deltai}, we can show that \eqref{eqn:hi_filter_1} is equivalent to
\begin{equation} \label{eqn:hi_filter_2}
    x_{i+1} + \gamma_i\sqrt{h_i} - \frac{\gamma_{i-1}^2}{2} \geq 0
\end{equation}
which will indeed result in the $(i+1)$th-order CBF
\begin{subequations}
    \begin{align}
        h_{i+1}(\vec{x}) &= \dot{h}_i + \gamma_i\sqrt{h_i} - \Delta_i \label{eqn:bip1_definition} \\
        &= x_{i+1} + \gamma_i\sqrt{h_i} - \frac{\gamma_{i-1}^2}{2}, \label{eqn:bip1_rewritten}
    \end{align}
\end{subequations}
proving the recursion in \eqref{eqn:h1_definition_2}-\eqref{eqn:hi_rewritten}.

Now consider $i = n$. Differentiating \eqref{eqn:hi_rewritten} and using \eqref{eqn:hi_definition} to substitute in for $\dot{h}_{n-1}$ and choosing $\Delta_n = 0$, we can show that \eqref{eqn:hi_filter_1} is equivalent to
\begin{equation} \label{eqn:u_filter}
    u \geq -\gamma_n\sqrt{h_n} - \frac{\gamma_{n-1}}{2\sqrt{h_{n-1}}}(h_n + \Delta_{n-1}) + \frac{\gamma_{n-1}^2}{2}
\end{equation}
which will result in an implementable safety filter if $\gamma_{n-1} \leq \sqrt{2\bar{u}}$.

\subsection{An Important Modification for Implementation}

If the inequality in \eqref{eqn:u_filter} is satisfied for all $t \geq 0$, it will ensure that $h_1(t), \cdots, h_n(t) \geq 0\ \forall t \geq 0$. However, this does not preclude $h_i(t) \to 0$ for any $i$, and the recursion requires division by $\sqrt{h_i}$ for every $i < n$. Thus, in order to prevent division by infinitesimal numbers, it is important to modify the filter in \eqref{eqn:hi_filter_1} with a small hyperparameter $\epsilon_i > 0$, so that we now seek to ensure that
\begin{equation} \label{eqn:hi_filter_1_with_epsilon}
    \dot{h}_i \geq -\gamma_i\sqrt{h_i} + \Delta_i + \epsilon_i.
\end{equation}
It is straightforward to verify that the recursion in \eqref{eqn:simplified_h_1}-\eqref{eqn:simplified_Delta_i} is correct to incorporate $\epsilon_i$.
\section{Proof of Theorem \ref{thm:simplified_solution}} \label{app:simplified_proof}

There are two steps to proving Theorem \ref{thm:simplified_solution}:
\begin{enumerate}
    \item Show that \eqref{eqn:safety_filter_cost}-\eqref{eqn:safety_filter_input_constraint} with $N = 1$, $b_1 = 1$, and $c_1$ as in \eqref{eqn:simplified_c} satisfies the definition of a safety filter in Definition \ref{def:safety_filter} with respect to $\bar{S}$ as in \eqref{eqn:simplified_composite_safe_set}.
    \item Show that the safety filter with $\gamma_i, \epsilon_i > 0$ and $\gamma_{n-1} \leq \sqrt{2\bar{u}}$ satisfies the definition of implementability in Definition \ref{def:implementability}.
\end{enumerate}
These two steps complete the proof. For the first step, we will find the following lemma useful:
\begin{lemma} \label{lem:simplified_h_i_dot}
    Consider $h_i(\vec{x})$ and $\Delta_i(\vec{x})$ as defined in \eqref{eqn:simplified_h_1}-\eqref{eqn:simplified_Delta_i}. Then, for every $i \in \{1, \dots, n-1\}$, we have
    \begin{equation}
        \dot{h}_i = x_{i+1} + \Delta_i - \frac{\gamma_{i-1}^2}{2}\ \forall i \in \{2, \dots, n-1\}
    \end{equation}
\end{lemma}
\begin{proof}
    From \eqref{eqn:simplified_h_1}, we have
    \begin{equation}
        \dot{h}_1 = x_2 = x_2 + \Delta_1 - \frac{\gamma_0^2}{2}.
    \end{equation}
    Now, suppose that for any $i \in \{2, \dots, n-1\}$, we have $\dot{h}_{i-1} = x_i + \Delta_{i-1} - \frac{\gamma_{i-2}^2}{2}$. Then,
    \begin{align}
        \dot{h}_i &= x_{i+1} + \frac{\gamma_{i-1}}{2\sqrt{h_{i-1}}}\dot{h}_{i-1} \nonumber \\
        &= x_{i+1} + \frac{\gamma_{i-1}}{2\sqrt{h_{i-1}}}(x_i - \frac{\gamma_{i-2}^2}{2} + \Delta_{i-1}) \nonumber \\
        &= x_{i+1} + \frac{\gamma_{i-1}}{2\sqrt{h_{i-1}}}(h_i - \gamma_{i-1}\sqrt{h_{i-1}} + \epsilon_{i-1} + \Delta_{i-1}) \nonumber \\
        &= x_{i+1} + \Delta_i - \frac{\gamma_{i-1}^2}{2}
    \end{align}
    proving the recursion.
\end{proof}

\subsubsection*{Step 1: Proving the Safety Filter}

Define $g_1(\vec{x}), \dots, g_n(\vec{x})$ as
\begin{equation} \label{eqn:thm_1_gi_definition}
    g_i(\vec{x}) := h_i(\vec{x}) - \frac{\epsilon_i^2}{\gamma_i^2}\ \forall i \in \{1, \dots, n\}
\end{equation}
for $h_i(\vec{x})$ as in \eqref{eqn:simplified_h_1}-\eqref{eqn:simplified_h_i}. Then, \eqref{eqn:simplified_composite_safe_set} is equivalent to
\begin{equation}
    \bar{S} = \{\vec{x} \in \bbR^n : g_i(\vec{x}) \geq 0\ \forall i = 1, \dots, n\}
\end{equation}
Thus, under Definition \ref{def:safety_filter}, Theorem \ref{thm:simplified_solution} describes a safety filter if, whenever $\vec{x} \in \bar{S}$, $b_1u + c_1 \geq 0$, and $\ubar{u} \leq u \leq \bar{u}$, we have $\dot{g}_i \geq -\alpha_i(g_i)\ \forall i$ for some $\alpha_i \in \calK_\infty$.

First, we will show that whenever $\vec{x} \in \bar{S}$, we have $\dot{g}_i(\vec{x}) \geq -\gamma_i\sqrt{g_i(\vec{x})}\ \forall i \in \{1, \dots, n-1\}$ regardless of $u$, simply by the definitions of $\bar{S}$, $h_i(\vec{x})$, and $\Delta_i(\vec{x})$. We will then proceed to show that choosing $u$ according to $b_1u + c_1 \geq 0$ ensures $\dot{g}_n(\vec{x}) \geq -\gamma_n\sqrt{g_n(\vec{x})}$, completing the first step of the proof.

For the first part, $\vec{x} \in \bar{S} \implies h_i(\vec{x}) \geq \frac{\epsilon_i^2}{\gamma_i^2} > 0\ \forall i$, which in turn implies that $\Delta_i \geq 0\ \forall i$. Then, using Lemma \ref{lem:simplified_h_i_dot}, for any $i \in \{1, \dots, n-1\}$, we have
\begin{align}
    \dot{g}_i &= \dot{h}_i = x_{i+1} - \frac{\gamma_{i-1}^2}{2} + \Delta_i \nonumber \\
    &= h_{i+1} - \gamma_i\sqrt{h_i} + \epsilon_i + \Delta_i \nonumber \\
    &> -\gamma_i\sqrt{h_i} + \epsilon_i \nonumber \\
    &= -\gamma_i\sqrt{g_i + \frac{\epsilon_i^2}{\gamma_i^2}} + \epsilon_i \nonumber \\
    &\geq -\gamma_i\sqrt{g_i} \label{eqn:simplified_g_i_dot}
\end{align}
since $\sqrt{a + b} \leq \sqrt{a} + \sqrt{b}$ for any $a, b \geq 0$.

For the second part, from Lemma \ref{lem:simplified_h_i_dot}, we have
\begin{align}
    \dot{g}_n &= \dot{h}_n = u + \frac{\gamma_{n-1}}{2\sqrt{h_{n-1}}}\dot{h}_{n-1} \nonumber \\
    &= u + \frac{\gamma_{n-1}}{2\sqrt{h_{n-1}}}(x_n - \frac{\gamma_{n-2}^2}{2} + \Delta_{n-1}) \nonumber \\
    &= u + \frac{\gamma_{n-1}}{2\sqrt{h_{n-1}}}(h_n - \gamma_{n-1}\sqrt{h_{n-1}} + \epsilon_{n-1} + \Delta_{n-1}) \nonumber \\
    &= u + \frac{\gamma_{n-1}}{2\sqrt{h_{n-1}}}(h_n + \epsilon_{n-1} + \Delta_{n-1}) - \frac{\gamma_{n-1}^2}{2}.
\end{align}
Thus, choosing $u$ such that $b_1u + c_1 \geq 0$ for $b_1 = 1$ and $c_1$ as in \eqref{eqn:simplified_c} ensures that
\begin{align}
    \dot{g}_n &\geq -\gamma_n\sqrt{h_n} + \epsilon_n \nonumber \\
    &\geq -\gamma_n\sqrt{g_n}
\end{align}
similarly to \eqref{eqn:simplified_g_i_dot}.

Therefore, the filter described in Theorem \ref{thm:simplified_solution} is a valid safety filter.

\subsubsection*{Step 2: Proving Implementability}

The safety filter described in Theorem \ref{thm:simplified_solution} is implementable if, whenever $\vec{x} \in \bar{S}$, there exists a $u \in [\ubar{u}, \bar{u}]$ satisfying $b_1u + c_1 \geq 0$. Whenever $\vec{x} \in \bar{S}$, we have $h_i(\vec{x}) \geq \frac{\epsilon_i^2}{\gamma_i^2}$. Thus, we have
\begin{align}
    b_1u + c_1 &\geq u + \gamma_n\sqrt{h_n} - \frac{\gamma_{n-1}^2}{2} - \epsilon_n \nonumber \\
    &\geq u - \frac{\gamma_{n-1}^2}{2}
\end{align}
and thus $b_1u + c_1 \geq 0 \impliedby u \geq \frac{\gamma_{n-1}^2}{2}$. The choice of $\gamma_{n-1} \leq \sqrt{2\bar{u}}$ therefore ensures that $u = \bar{u}$ is always a feasible solution to the optimization whenever $\vec{x} \in \bar{S}$, and thus that the safety filter is implementable. \hfill $\Box$
\section{Algorithm for Parameter Tuning} \label{app:parameter_tuning}

\begin{algorithm}[H]
    \caption{Parameter Tuning}
    \label{alg:parameter_tuning}
    \begin{algorithmic}
        \STATE{{\bf inputs:} $\gamma_{1,nom} > 0$, $\delta$, $\beta_1, \dots, \beta_n$, $\alpha_2, \dots, \alpha_n$, $\eta_1, \dots, \eta_{n-1}$, $\varepsilon > 0$}
        \IF{n == 1}
            \STATE{$\gamma_1 \gets \gamma_{1,nom}$}
            \STATE{Calculate $\epsilon_1$ from \eqref{eqn:epsilon_1_upper_limit} using $\gamma_1$, replacing inequality with equals}
            \STATE{{\bf return} $\gamma_1, \epsilon_1$}
        \ENDIF
        \STATE \COMMENT{Find an impermissibly large $\gamma_1$}
        \STATE{$\gamma_{1,u} \gets \gamma_{1,nom}$}
        \WHILE{\TRUE}
            \STATE{Calculate $\gamma_2, \dots, \gamma_n$ from \eqref{eqn:gamma_2_lower_limit}-\eqref{eqn:gamma_i_lower_limit} using $\gamma_{1,u}$, replacing inequalities with equals}
            \IF{any $\gamma_i$ does not satisfy \eqref{eqn:gamma_i_upper_limit}-\eqref{eqn:gamma_n-1_upper_limit}}
                \STATE{{\bf break}}
            \ENDIF
            \STATE{$\gamma_{1,u} \gets 2\gamma_{1,u}$}
        \ENDWHILE
        \STATE \COMMENT{Find a permissibly small $\gamma_1$}
        \STATE{$\gamma_{1,\ell} \gets \gamma_{1,nom}$}
        \WHILE{\TRUE}
            \STATE{Calculate $\gamma_2, \dots, \gamma_n$ from \eqref{eqn:gamma_2_lower_limit}-\eqref{eqn:gamma_i_lower_limit} using $\gamma_{1,\ell}$, replacing inequalities with equals}
            \IF{all $\gamma_i$ satisfy \eqref{eqn:gamma_i_upper_limit}-\eqref{eqn:gamma_n-1_upper_limit}}
                \STATE{{\bf break}}
            \ENDIF
            \STATE{$\gamma_{1,\ell} \gets \gamma_{1,\ell}/2$}
        \ENDWHILE
        \STATE \COMMENT{Find the largest permissible value of $\gamma_1$ using binary search}
        \WHILE{$\gamma_{1,u} - \gamma_{1,\ell} > \varepsilon$}
            \STATE{$\gamma_1 \gets (\gamma_{1,u} + \gamma_{1,\ell})/2$}
            \STATE{Calculate $\gamma_2, \dots, \gamma_n$ from \eqref{eqn:gamma_2_lower_limit}-\eqref{eqn:gamma_i_lower_limit} using $\gamma_1$, replacing inequalities with equals}
            \IF{all $\gamma_i$ satisfy \eqref{eqn:gamma_i_upper_limit}-\eqref{eqn:gamma_n-1_upper_limit}}
                \STATE{$\gamma_{1,\ell} \gets \gamma_1$}
            \ELSE
                \STATE{$\gamma_{1,u} \gets \gamma_1$}
            \ENDIF
        \ENDWHILE
        \STATE{$\gamma_1 \gets \gamma_{1,\ell}$}
        \STATE{Calculate $\gamma_2, \dots, \gamma_n, \epsilon_1, \dots, \epsilon_n$ from \eqref{eqn:gamma_2_lower_limit}-\eqref{eqn:epsilon_i_upper_limit} using $\gamma_1$, replacing inequalities with equals}
        \STATE{{\bf return} $\gamma_1, \dots, \gamma_n, \epsilon_1, \dots, \epsilon_n$}
    \end{algorithmic}
\end{algorithm}

\section{Proof of Lemma \ref{lem:tuning}} \label{app:tuning_proof}

Algorithm \ref{alg:parameter_tuning} makes use of the fact that, given values for $\gamma_1$, $\delta$, $\alpha_i$, $\beta_i$, and $\eta_i$, one can calculate values for $\gamma_i\ \forall i \geq 2$ and $\epsilon_i\ \forall i$ from \eqref{eqn:gamma_2_lower_limit}-\eqref{eqn:epsilon_i_upper_limit} by setting the inequalities to equations. Furthermore, the resulting values of $\gamma_i$ and $\epsilon_i$ increase with the chosen value of $\gamma_1$ and go to zero in the limit as $\gamma_1$ goes to zero. The inequalities in \eqref{eqn:gamma_i_upper_limit}-\eqref{eqn:gamma_n-1_upper_limit} are then additional restrictions which, if $\gamma_2, \dots, \gamma_n$ are calculated from $\gamma_1$, implicitly represent upper bounds on $\gamma_1$.

A special case is $n = 1$, in which case there turn out to be no limits on $\gamma_1$ save for \eqref{eqn:gamma_1_lower_limit}. In this case, the algorithm calculates $\epsilon_1$ and terminates immediately. Thus, we henceforth focus on the case where $n \geq 2$.

The algorithm operates in three steps:
\begin{enumerate}
    \item Repeatedly double the input $\gamma_{1,nom}$ until a value $\gamma_{1,u}$ is obtained which is too large for all of \eqref{eqn:gamma_1_lower_limit}-\eqref{eqn:gamma_n-1_upper_limit} to be simultaneously satisfied.
    \item Repeatedly half the input $\gamma_{1,nom}$ until a value $\gamma_{1,\ell}$ is obtained which is sufficiently small that all of \eqref{eqn:gamma_1_lower_limit}-\eqref{eqn:gamma_n-1_upper_limit} are simultaneously satisfied.
    \item Search between $\gamma_{1,u}$ and $\gamma_{1,\ell}$ via binary search until convergence to within tolerance $\varepsilon > 0$ to find the largest value of $\gamma_1$ that permits all of \eqref{eqn:gamma_1_lower_limit}-\eqref{eqn:gamma_n-1_upper_limit} to be simultaneously satisfied.
\end{enumerate}

It is trivial that, if finite values $\gamma_{1,u} > 0$ and $\gamma_{1,\ell} > 0$ exist, each of the three steps will terminate in finitely many steps. Thus, the proof is complete if we show that such finite values $\gamma_{1,u} > 0$ and $\gamma_{1,\ell} > 0$ exist.

\subsubsection*{Existence of Too Large $\gamma_1$}

The first step in this proof is to show that there exists a value $\gamma_{1,u}$ such that, if $\gamma_2, \dots, \gamma_n$ are calculated from \eqref{eqn:gamma_2_lower_limit}-\eqref{eqn:gamma_i_lower_limit} setting the inequalities to equations, then at least one of the inequalities in \eqref{eqn:gamma_i_upper_limit}-\eqref{eqn:gamma_n-1_upper_limit} is violated. The subtlety here is that, in the general problem statement, we may not want to require constraints on all state variables, i.e. some of $\ubar{x}_i$ or $\bar{x}_i$ may be $-\infty$ or $+\infty$ respectively. However, we can assume without loss of generality that at least one of $\ubar{x}_1$ and $\bar{x}_1$ is finite, as otherwise we would redefine $x_1$ to be the first state variable that has a constraint. Furthermore, we can assume w.l.o.g that at least one of $\ubar{u}$ and $\bar{u}$ is finite.

If at least one of $\ubar{x}_1$ and $\bar{x}_1$ is finite, then \eqref{eqn:gamma_2_lower_limit}-\eqref{eqn:gamma_i_lower_limit} give lower bounds on $\gamma_2, \dots, \gamma_n$ which scale with $\gamma_1$. Specifically, we must have $\gamma_2 \geq \frac{(1/2 + \alpha_2)\gamma_1^{3/2}}{\sqrt{\beta_1\sqrt{\delta}}} = \Omega(\gamma_1^{3/2})$, $\gamma_3 \geq \frac{(1 + \alpha_3)\gamma_2^2}{\gamma_1\sqrt{\beta_2\alpha_2}} = \Omega(\gamma_1^2)$, $\gamma_4 \geq \frac{(1 + \alpha_4)\gamma_3^2}{\gamma_2\sqrt{\beta_3\alpha_3}} = \Omega(\gamma_1^{5/2})$, $\cdots$, $\gamma_n \geq \frac{(1 + \alpha_n)\gamma_{n-1}^2}{\gamma_{n-2}\sqrt{\beta_{n-1}\alpha_{n-1}}} = \Omega(\gamma_1^{(n+1)/2})$. Furthermore, if at least one of $\ubar{u}$ and $\bar{u}$ is finite, then per \eqref{eqn:gamma_n-1_upper_limit}, we require $\gamma_{n-1} \leq O(\sqrt{\min\{-\ubar{u}, \bar{u}\}})$. Thus, there exists a $\gamma_{1,u} = O(\min\{-\ubar{u}, \bar{u}\}^{1/n})$ which ensures that \eqref{eqn:gamma_2_lower_limit}-\eqref{eqn:gamma_i_lower_limit} and \eqref{eqn:gamma_n-1_upper_limit} cannot be simultaneously satisfied.

\subsubsection*{Existence of Small Enough $\gamma_1$}

Now, accounting for arbitrarily many of the $\ubar{x}_i$ and $\bar{x}_i$ being finite, setting the inequalities in \eqref{eqn:gamma_2_lower_limit}-\eqref{eqn:gamma_i_lower_limit} to equations, one can show that $\gamma_2, \dots, \gamma_n$ scale with $\gamma_1$ as $\gamma_2 = O(\gamma_1^{k_2})$, $\gamma_3 = O(\gamma_1^{k_3})$, $\cdots$, $\gamma_n = O(\gamma_1^{k_n})$ for a sequence $\{k_i\}_{i = 1, 2, \dots, n}$ such that $k_1 = 1$, $k_2 \in \{\frac{3}{2}, 2\}$, and $k_{i-1} < k_i \leq 2^{i-1}\ \forall i \geq 2$. Therefore, defining $\Gamma_1 := \sqrt{2(1 - \eta_1)\min\{-\ubar{x}_3, \bar{x}_3\}}$, $\cdots$, $\Gamma_{n-2} := \sqrt{2(1 - \eta_{n-2})\min\{-\ubar{x}_n, \bar{x}_n\}}$, $\Gamma_{n-1} := \sqrt{2(1 - \eta_{n-1})\min\{-\ubar{u}, \bar{u}\}}$, it is clear that there exists a $\gamma_{1,\ell} = O(\min\{\Gamma_1, \Gamma_2^{1/k_2}, \cdots, \Gamma_{n-1}^{1/k_{n-1}}\})$ which ensures that \eqref{eqn:gamma_2_lower_limit}-\eqref{eqn:gamma_i_lower_limit} and \eqref{eqn:gamma_i_upper_limit}-\eqref{eqn:gamma_n-1_upper_limit} can be simultaneously satisfied. \hfill $\Box$
\section{Proof of Theorem \ref{thm:full_solution}} \label{app:full_proof}

As in the proof of Theorem \ref{thm:simplified_solution}, there are two steps to proving Theorem \ref{thm:full_solution}:
\begin{enumerate}
    \item Show that \eqref{eqn:safety_filter_cost}-\eqref{eqn:safety_filter_input_constraint} with $N = 2n$, and $b_j, c_j, b_{n+j}, c_{n+j}$ as in \eqref{eqn:full_b_lower}-\eqref{eqn:full_c_upper} satisfies the definition of a safety filter in Definition \ref{def:safety_filter} with respect to $\bar{S}$ as in \eqref{eqn:full_composite_safe_set}.
    \item Show that the safety filter with $\gamma_i, \epsilon_i$ chosen by Algorithm \ref{alg:parameter_tuning} satisfies the definition of implementability in Definition \ref{def:implementability}.
    \item Show that $\bar{S}$ in \eqref{eqn:full_composite_safe_set} is non-empty.
\end{enumerate}
The first step proceeds identically to the first step in the proof of Theorem \ref{thm:simplified_solution}, analyzing each $\ubar{S}_{ji}$ and $\bar{S}_{ji}$ in \eqref{eqn:full_safe_subsets_lower}-\eqref{eqn:full_safe_subsets_upper} separately, and is thus omitted for conciseness. We will focus here only on the second and third steps. Note that the third step is a subtle point that was not a concern in the simplified problem setting of Theorem \ref{thm:simplified_solution}, but is of potential concern here.

\subsubsection*{Step 2: Proving Implementability}

Given that \eqref{eqn:safety_filter_cost}-\eqref{eqn:safety_filter_input_constraint} with $N = 2n$ and $b_j$, $c_j$, $b_{n+j}$, and $c_{n+j}$ as in \eqref{eqn:full_b_lower}-\eqref{eqn:full_c_upper} is a safety filter with respect to $\bar{S}$ as in \eqref{eqn:full_composite_safe_set}, implementability requires that the optimization is feasible whenever $\vec{x} \in \bar{S}$. As $u$ is a scalar, the constraints constitute $n+1$ lower bounds and $n+1$ upper bounds on $u$:
\begin{equation} \label{eqn:full_constraints_on_u}
    \max\{\ubar{u}, -c_1, \cdots, -c_n\} \leq u \leq \min\{\bar{u}, c_{n+1}, \cdots, c_{2n}\}.
\end{equation}
Therefore, implementability is equivalent to requiring that whenever $\vec{x} \in \bar{S}$, every upper bound in \eqref{eqn:full_constraints_on_u} is greater than or equal to every lower bound. Taken at face value, this would require analyzing $\frac{n^2 + 3n}{2}$ comparisons (since $\bar{u} > \ubar{u}$ by assumption). However, considering general indices $j, k \in \{1, \dots, n\}$, it turns out that the analysis can be broken down into only a few cases:
\begin{enumerate}
    \item Showing that $c_{n+j} \geq \ubar{u}$, or similarly, that $\bar{u} \geq -c_j$, for any $j \in \{1, \dots, n\}$.
    \item Showing that $c_{n+j} \geq -c_k$ for $j, k \in \{1, \dots, n\}$ such that:
    \begin{enumerate}
        \item $j = k$,
        \item $j = k + 1$ (or similarly, $k = j + 1$), and
        \item $j \geq k + 2$ (or similarly, $k \geq j + 2$).
    \end{enumerate}
\end{enumerate}

\subsubsection*{Case 1: Comparing the $j$th State Variable Constraints to the Input Constraints}

In this case, we will show that $c_{n+j} \geq \ubar{u}$ for any $j \in \{1, \dots, n\}$. The process of showing $\bar{u} \geq -c_j$ is identical, and thus omitted for conciseness. From \eqref{eqn:full_c_lower}, whenever $\vec{x} \in \bar{S}$, we have
\begin{align}
    c_{n+j} &\geq \bar{\gamma}_{j(n-j+1)}\sqrt{\bar{h}_{j(n-j+1)}} - \frac{\bar{\gamma}_{j(n-j)}^2}{2} - \bar{\epsilon}_{j(n-j+1)} \nonumber \\
    &\geq -\frac{\bar{\gamma}_{j(n-j)}^2}{2} \nonumber \\
    &= -\frac{\gamma_{n-1}^2}{2}
\end{align}
using the definition of $\bar{S}_{j(n-j+1)}$ in \eqref{eqn:full_safe_subsets_upper} for the second inequality and \eqref{eqn:gamma_epsilon_matching} for the final equation. Now, if $\gamma_{n-1}$ is chosen according to Algorithm \ref{alg:parameter_tuning}, Lemma \ref{lem:tuning} guarantees that \eqref{eqn:gamma_n-1_upper_limit} is satisfied. It follows from \eqref{eqn:gamma_n-1_upper_limit} that
\begin{equation}
    \frac{\gamma_{n-1}^2}{2} \leq (1 - \eta_{n-1})(-\ubar{u}) \leq -\ubar{u}
\end{equation}
since $\eta_{n-1} \in (0, 1)$ and $\ubar{u} < 0$. Therefore, $c_{n+j} \geq \ubar{u}$.

\subsubsection*{Case 2: Comparing the $j$th State Variable Constraints to the $k$th State Variable Constraint}

In this case, we will show that $c_{n+j} \geq -c_k$, or equivalently, that $c_{n+j} + c_k \geq 0$, for any pair $j, k \in \{1, \dots, n\}$ such that $j \geq k$. The case where $j < k$ is nearly identical and thus will be omitted for conciseness, but it is straightforward to verify using the same approach. From \eqref{eqn:full_c_lower} and \eqref{eqn:full_c_upper} and using \eqref{eqn:gamma_epsilon_matching}, whenever $\vec{x} \in \bar{S}$, we have
\begin{align}
    c_{n+j} + c_k &\geq \gamma_n\sqrt{\bar{h}_{j(n-j+1)}} - \frac{\bar{\gamma}_{j(n-j)}^2}{2} - \epsilon_n + \gamma_n\sqrt{\ubar{h}_{k(n-k+1)}} - \frac{\ubar{\gamma}_{k(n-k)}^2}{2} - \epsilon_n \nonumber \\
    &= \gamma_n(\sqrt{\bar{h}_{j(n-j+1)}} + \sqrt{\ubar{h}_{k(n-k+1)}}) - \frac{\bar{\gamma}_{j(n-j)}^2 + \ubar{\gamma}_{k(n-k)}^2}{2} - 2\epsilon_n \nonumber \\
    &\geq \gamma_n\sqrt{\bar{h}_{j(n-j+1)} + \ubar{h}_{k(n-k+1)}} - \frac{\bar{\gamma}_{j(n-j)}^2 + \ubar{\gamma}_{k(n-k)}^2}{2} - 2\epsilon_n.
\end{align}
Assuming for now that $j, k \leq n - 2$ (the alternative is implicitly handled later in the base cases of the proceeding recursion), we can then write
\begin{align}
    c_{n+j} + c_k &\geq \gamma_n\sqrt{\bar{h}_{j(n-j+1)} + \ubar{h}_{k(n-k+1)}} - \gamma_{n-1}^2 - 2\epsilon_n \nonumber \\
    &= \gamma_n\sqrt{-x_n + \gamma_{n-1}\sqrt{\bar{h}_{j(n-j)}} - \frac{\gamma_{n-2}^2}{2} - \epsilon_{n-1} + x_n + \gamma_{n-1}\sqrt{\ubar{h}_{k(n-k)}} - \frac{\gamma_{n-2}^2}{2} - \epsilon_{n-1}} - \gamma_{n-1}^2 - 2\epsilon_n \nonumber \\
    &= \gamma_n\sqrt{\gamma_{n-1}(\sqrt{\bar{h}_{j(n-j)}} + \sqrt{\ubar{h}_{k(n-k)}}) - \gamma_{n-2}^2 - 2\epsilon_{n-1}} - \gamma_{n-1}^2 - 2\epsilon_n \nonumber \\
    &\geq \gamma_n\sqrt{\gamma_{n-1}\sqrt{\bar{h}_{j(n-j)} + \ubar{h}_{k(n-k)}} - \gamma_{n-2}^2 - 2\epsilon_{n-1}} - \gamma_{n-1}^2 - 2\epsilon_n.
\end{align}
In this way, defining $C_{jk}^{n-j+1} := \bar{h}_{j(n-j+1)} + \ubar{h}_{k(n-k+1)}$, one can show the following recursion:
\begin{align}
    c_{n+j} + c_k &\geq \gamma_n\sqrt{C_{jk}^{n-j+1}} - \gamma_{n-1}^2 - 2\epsilon_n, \\
    C_{jk}^{n-j+1} &\geq \gamma_{n-1}\sqrt{C_{jk}^{n-j}} - \gamma_{n-2}^2 - 2\epsilon_{n-1}, \\
    &\;\; \vdots \nonumber
\end{align}
As suggested by the superscripts of $C_{jk}$, the base case of the recursion occurs when we reach $C_{jk}^1$, which is a function of $\ubar{h}_{j1}$. The specific form of the base case depends on the relative values of $j$ and $k$.

\subsubsection*{Case 2(a): $j = k$}

In this case, taking into account the fact that $\ubar{\gamma}_{j0} = \bar{\gamma}_{j0} = 0$ by definition, the base case is as follows:
\begin{align}
    C_{jk}^3 &= \bar{h}_{j3} + \ubar{h}_{j3} = -x_{j+2} + \bar{\gamma}_{j2}\sqrt{\bar{h}_{j2}} - \frac{\bar{\gamma}_{j1}^2}{2} - \bar{\epsilon}_{j2} + x_{j+2} + \ubar{\gamma}_{j2}\sqrt{\ubar{h}_{j2}} - \frac{\ubar{\gamma}_{j1}^2}{2} - \ubar{\epsilon}_{j2} \nonumber \\
    &= \gamma_{j+1}\sqrt{\bar{h}_{j2}} - \frac{\gamma_j^2}{2} - \epsilon_{j+1} + \gamma_{j+1}\sqrt{\ubar{h}_{j2}} - \frac{\gamma_j^2}{2} - \epsilon_{j+1} \nonumber \\
    &\geq \gamma_{j+1}\sqrt{\bar{h}_{j2} + \ubar{h}_{j2}} - \gamma_j^2 - 2\epsilon_{j+1} = \gamma_{j+1}\sqrt{C_{jk}^2} - \gamma_j^2 - 2\epsilon_{j+1}, \\
    C_{jk}^2 &= \bar{h}_{j2} + \ubar{h}_{j2} = -x_{j+1} + \bar{\gamma}_{j1}\sqrt{\bar{h}_{j1}} - \frac{\bar{\gamma}_{j0}^2}{2} - \bar{\epsilon}_{j1} + x_{j+1} + \ubar{\gamma}_{j1}\sqrt{\ubar{h}_{j1}} - \frac{\ubar{\gamma}_{j0}^2}{2} - \ubar{\epsilon}_{j1} \nonumber \\
    &= \gamma_j\sqrt{\bar{h}_{j1}} - \epsilon_j + \gamma_j\sqrt{\ubar{h}_{j1}} - \epsilon_j \nonumber \\
    &\geq \gamma_j\sqrt{\bar{h}_{j1} + \ubar{h}_{j1}} - 2\epsilon_j = \gamma_j\sqrt{C_{jk}^1} - 2\epsilon_j, \\
    C_{jk}^1 &= \bar{h}_{j1} + \ubar{h}_{j1} = \bar{x}_j - x_j + x_j - \ubar{x}_j = \bar{x}_j - \ubar{x}_j.
\end{align}
Now, from Lemma \ref{lem:tuning}, we know that \eqref{eqn:gamma_1_lower_limit}-\eqref{eqn:gamma_n-1_upper_limit} are satisfied. Using these inequalities, it is straightforward to show that:
\begin{align}
    C_{jk}^1 &= \bar{x}_j - \ubar{x}_j > 0, \\
    C_{jk}^2 &\geq \beta_j\gamma_j\sqrt{\bar{x}_j - \ubar{x}_j} > 0, \\
    C_{jk}^3 &\geq \alpha_{j+1}\beta_{j+1}\gamma_j^2 > 0, \\
    &\;\; \vdots \nonumber \\
    C_{jk}^{n-j+1} &\geq \alpha_{n-1}\beta_{n-1}\gamma_{n-2}^2 > 0, \\
    c_{n+j} + c_k &\geq \alpha_n\beta_n\gamma_{n-1}^2 > 0.
\end{align}

\subsubsection*{Case 2(b): $j = k + 1$}

In this case, taking into account the fact that $\ubar{\gamma}_{j0} = \bar{\gamma}_{j0} = 0$ by definition, the base case is as follows:
\begin{align}
    C_{jk}^3 &= \bar{h}_{j3} + \ubar{h}_{(j-1)4} = -x_{j+2} + \bar{\gamma}_{j2}\sqrt{\bar{h}_{j2}} - \frac{\bar{\gamma}_{j1}^2}{2} - \bar{\epsilon}_{j2} + x_{j+2} + \ubar{\gamma}_{(j-1)3}\sqrt{\ubar{h}_{(j-1)3}} - \frac{\ubar{\gamma}_{(j-1)2}^2}{2} - \ubar{\epsilon}_{(j-1)3} \nonumber \\
    &= \gamma_{j+1}\sqrt{\bar{h}_{j2}} - \frac{\gamma_j^2}{2} - \epsilon_{j+1} + \gamma_{j+1}\sqrt{\ubar{h}_{(j-1)3}} - \frac{\gamma_j^2}{2} - \epsilon_{j+1} \nonumber \\
    &\geq \gamma_{j+1}\sqrt{\bar{h}_{j2} + \ubar{h}_{(j-1)3}} - \gamma_j^2 - 2\epsilon_{j+1} = \gamma_{j+1}\sqrt{C_{jk}^2} - \gamma_j^2 - 2\epsilon_{j+1}, \\
    C_{jk}^2 &= \bar{h}_{j2} + \ubar{h}_{(j-1)3} = -x_{j+1} + \bar{\gamma}_{j1}\sqrt{\bar{h}_{j1}} - \frac{\bar{\gamma}_{j0}^2}{2} - \bar{\epsilon}_{j1} + x_{j+1} + \ubar{\gamma}_{(j-1)2}\sqrt{\ubar{h}_{(j-1)2}} - \frac{\ubar{\gamma}_{(j-1)1}^2}{2} - \ubar{\epsilon}_{(j-1)2} \nonumber \\
    &= \gamma_j\sqrt{\bar{h}_{j1}} - \epsilon_j + \gamma_j\sqrt{\ubar{h}_{(j-1)2}} - \frac{\gamma_{j-1}^2}{2} - \epsilon_j \nonumber \\
    &\geq \gamma_j\sqrt{\bar{h}_{j1} + \ubar{h}_{(j-1)2}} - \frac{\gamma_{j-1}^2}{2} - 2\epsilon_j = \gamma_j\sqrt{C_{jk}^1} - \frac{\gamma_{j-1}^2}{2} - 2\epsilon_j, \\
    C_{jk}^1 &= \bar{h}_{j1} + \ubar{h}_{(j-1)2} = \bar{x}_j - x_j + x_j + \ubar{\gamma}_{(j-1)1}\sqrt{\ubar{h}_{(j-1)1}} - \frac{\ubar{\gamma}_{(j-1)0}^2}{2} - \ubar{\epsilon}_{(j-1)1} = \bar{x}_j + \gamma_{j-1}\sqrt{\ubar{h}_{(j-1)1}} - \epsilon_{j-1} \nonumber \\
    &\geq \bar{x}_j
\end{align}
where the final inequality uses the fact that $\ubar{h}_{(j-1)1} \geq \frac{\epsilon_{j-1}^2}{\gamma_{j-1}^2}$ whenever $\vec{x} \in \bar{S}$. Now, from Lemma \ref{lem:tuning}, we know that \eqref{eqn:gamma_1_lower_limit}-\eqref{eqn:gamma_n-1_upper_limit} are satisfied. Using these inequalities, it is straightforward to show that:
\begin{align}
    C_{jk}^1 &\geq \bar{x}_j > 0, \\
    C_{jk}^2 &\geq \alpha_j\beta_j\gamma_{j-1}^2 > 0, \\
    C_{jk}^3 &\geq \alpha_{j+1}\beta_{j+1}\gamma_j^2 > 0, \\
    &\;\; \vdots \nonumber \\
    &\;\; \vdots \nonumber \\
    C_{jk}^{n-j+1} &\geq \alpha_{n-1}\beta_{n-1}\gamma_{n-2}^2 > 0, \\
    c_{n+j} + c_k &\geq \alpha_n\beta_n\gamma_{n-1}^2 > 0.
\end{align}

\subsubsection*{Case 2(c): $j \geq k + 2$}

In this case, taking into account the fact that $\ubar{\gamma}_{j0} = \bar{\gamma}_{j0} = 0$ by definition, the base case is as follows:
\begin{align}
    C_{jk}^3 &= \bar{h}_{j3} + \ubar{h}_{k(j-k+3)} \nonumber \\
    &= -x_{j+2} + \bar{\gamma}_{j2}\sqrt{\bar{h}_{j2}} - \frac{\bar{\gamma}_{j1}^2}{2} - \bar{\epsilon}_{j2} + x_{j+2} + \ubar{\gamma}_{k(j-k+2)}\sqrt{\ubar{h}_{k(j-k+2)}} - \frac{\ubar{\gamma}_{k(j-k+1)}^2}{2} - \ubar{\epsilon}_{k(j-k+2)} \nonumber \\
    &= \gamma_{j+1}\sqrt{\bar{h}_{j2}} - \frac{\gamma_j^2}{2} - \epsilon_{j+1} + \gamma_{j+1}\sqrt{\ubar{h}_{k(j-k+2)}} - \frac{\gamma_j^2}{2} - \epsilon_{j+1} \nonumber \\
    &\geq \gamma_{j+1}\sqrt{\bar{h}_{j2} + \ubar{h}_{k(j-k+2)}} - \gamma_j^2 - 2\epsilon_{j+1} = \gamma_{j+1}\sqrt{C_{jk}^2} - \gamma_j^2 - 2\epsilon_{j+1}, \\
    C_{jk}^2 &= \bar{h}_{j2} + \ubar{h}_{k(j-k+2)} \nonumber \\
    &= -x_{j+1} + \bar{\gamma}_{j1}\sqrt{\bar{h}_{j1}} - \frac{\bar{\gamma}_{j0}^2}{2} - \bar{\epsilon}_{j1} + x_{j+1} + \ubar{\gamma}_{k(j-k+1)}\sqrt{\ubar{h}_{k(j-k+1)}} - \frac{\ubar{\gamma}_{k(j-k)}^2}{2} - \ubar{\epsilon}_{k(j-k+1)} \nonumber \\
    &= \gamma_j\sqrt{\bar{h}_{j1}} - \epsilon_j + \gamma_j\sqrt{\ubar{h}_{k(j-k+1)}} - \frac{\gamma_{j-1}^2}{2} - \epsilon_j \nonumber \\
    &\geq \gamma_j\sqrt{\bar{h}_{j1} + \ubar{h}_{k(j-k+1)}} - \frac{\gamma_{j-1}^2}{2} - 2\epsilon_j = \gamma_j\sqrt{C_{jk}^1} - \frac{\gamma_{j-1}^2}{2} - 2\epsilon_j, \\
    C_{jk}^1 &= \bar{h}_{j1} + \ubar{h}_{k(j-k+1)} = \bar{x}_j - x_j + x_j + \ubar{\gamma}_{k(j-k)}\sqrt{\ubar{h}_{k(j-k)}} - \frac{\ubar{\gamma}_{k(j-k-1)}^2}{2} - \ubar{\epsilon}_{k(j-k)} \nonumber \\
    &= \bar{x}_j + \gamma_{j-1}\sqrt{\ubar{h}_{k(j-k)}} - \frac{\gamma_{j-2}^2}{2} - \epsilon_{j-1} \nonumber \\
    &\geq \bar{x}_j - \frac{\gamma_{j-2}^2}{2}
\end{align}
where the final inequality uses the fact that $\ubar{h}_{k(j-k)} \geq \frac{\epsilon_{j-1}^2}{\gamma_{j-1}^2}$ whenever $\vec{x} \in \bar{S}$. Now, from Lemma \ref{lem:tuning}, we know that \eqref{eqn:gamma_1_lower_limit}-\eqref{eqn:gamma_n-1_upper_limit} are satisfied. Using these inequalities, it is straightforward to show that:
\begin{align}
    C_{jk}^1 &\geq \eta_{j-2}\bar{x}_j > 0, \\
    C_{jk}^2 &\geq \alpha_j\beta_j\gamma_{j-1}^2 > 0, \\
    C_{jk}^3 &\geq \alpha_{j+1}\beta_{j+1}\gamma_j^2 > 0, \\
    &\;\; \vdots \nonumber \\
    &\;\; \vdots \nonumber \\
    C_{jk}^{n-j+1} &\geq \alpha_{n-1}\beta_{n-1}\gamma_{n-2}^2 > 0, \\
    c_{n+j} + c_k &\geq \alpha_n\beta_n\gamma_{n-1}^2 > 0.
\end{align}

\subsubsection*{Step 3: Proving that $\bar{S}$ is Non-Empty}

We accomplish this final step by proving that $\vec{x}_* := [x_{1*}, 0, \dots, 0]^\top$ is in $\bar{S}$ for any $x_{1*} \in [\ubar{x}_1 + \delta, \bar{x}_1 - \delta]$. We will only prove here that $\vec{x}_* \in \bigcap_{j=1}^n\bigcap_{i=1}^{n-j+1} \ubar{S}_{ji}$ for $\ubar{S}_{ji}$ as in \eqref{eqn:full_safe_subsets_lower}, as the proof that $\vec{x}_* \in \bigcap_{j=1}^n\bigcap_{i=1}^{n-j+1} \bar{S}_{ji}$ is virtually identical and thus omitted for conciseness. Consider first the constraints on $x_1$. Using \eqref{eqn:lower_h_1}-\eqref{eqn:lower_h_i}, \eqref{eqn:gamma_epsilon_matching}, and \eqref{eqn:gamma_1_lower_limit}-\eqref{eqn:gamma_n-1_upper_limit}, we have
\begin{align}
    \ubar{h}_{11}(\vec{x}_*) &= x_{1*} - \ubar{x}_1 \geq \delta, \label{eqn:Sbar_non_empty_1} \\
    \ubar{h}_{12}(\vec{x}_*) &= 0 + \gamma_1\sqrt{\ubar{h}_{11}(\vec{x}_*)} - \epsilon_1 \geq \gamma_1\sqrt{\delta} - \epsilon_1 \geq \beta_1\gamma_1\sqrt{\delta}, \\
    \ubar{h}_{13}(\vec{x}_*) &= 0 + \gamma_2\sqrt{\ubar{h}_{12}(\vec{x}_*)} - \frac{\gamma_1^2}{2} - \epsilon_2 \geq \gamma_2\sqrt{\beta_1\gamma_1\sqrt{\delta}} - \frac{\gamma_1^2}{2} - \epsilon_2 \geq \alpha_2\beta_2\gamma_1^2, \\
    \ubar{h}_{14}(\vec{x}_*) &= 0 + \gamma_3\sqrt{\ubar{h}_{13}(\vec{x}_*)} - \frac{\gamma_2^2}{2} - \epsilon_3 \geq \gamma_3\sqrt{\alpha_2\beta_2}\gamma_1 - \frac{\gamma_2^2}{2} - \epsilon_3 \geq \alpha_3\beta_3\gamma_2^2, \\
    &\;\; \vdots \nonumber \\
    \ubar{h}_{1n}(\vec{x}_*) &= 0 + \gamma_{n-1}\sqrt{\ubar{h}_{1(n-1)}(\vec{x}_*)} - \frac{\gamma_{n-2}^2}{2} - \epsilon_{n-1} \geq \gamma_{n-1}\sqrt{\alpha_{n-2}\beta_{n-2}}\gamma_{n-3} - \frac{\gamma_{n-2}^2}{2} - \epsilon_{n-1} \nonumber \\
    &\geq \alpha_{n-1}\beta_{n-1}\gamma_{n-2}^2.
\end{align}
Now considering the constraints on $x_j$ for any $j \in \{2, \dots, n\}$, using \eqref{eqn:lower_h_1}-\eqref{eqn:lower_h_i}, \eqref{eqn:gamma_epsilon_matching}, and \eqref{eqn:gamma_1_lower_limit}-\eqref{eqn:gamma_n-1_upper_limit}, we have
\begin{align}
    \ubar{h}_{j1}(\vec{x}_*) &= 0 - \ubar{x}_j = -\ubar{x}_j, \\
    \ubar{h}_{j2}(\vec{x}_*) &= 0 + \gamma_j\sqrt{\ubar{h}_{j1}} - \epsilon_j = \gamma_j\sqrt{-\ubar{x}_j} - \epsilon_j \geq \beta_j\gamma_j\sqrt{-\ubar{x}_j} \\
    \ubar{h}_{j3}(\vec{x}_*) &= 0 + \gamma_{j+1}\sqrt{\ubar{h}_{j2}} - \frac{\gamma_j^2}{2} - \epsilon_{j+1} \geq \gamma_{j+1}\sqrt{\beta_j\gamma_j\sqrt{-\ubar{x}_j}} - \frac{\gamma_j^2}{2} - \epsilon_{j+1} \geq \alpha_{j+1}\beta_{j+1}\gamma_j^2 \\
    \ubar{h}_{j4}(\vec{x}_*) &= 0 + \gamma_{j+2}\sqrt{\ubar{h}_{j3}} - \frac{\gamma_{j+1}^2}{2} - \epsilon_{j+2} \geq \gamma_{j+2}\sqrt{\alpha_{j+1}\beta_{j+1}}\gamma_j - \frac{\gamma_{j+1}^2}{2} - \epsilon_{j+2} \geq \alpha_{j+2}\beta_{j+2}\gamma_{j+1}^2 \\
    &\;\; \vdots \nonumber \\
    \ubar{h}_{j(n-j+1)}(\vec{x}_*) &= 0 + \gamma_{n-1}\sqrt{\ubar{h}_{j(n-j)}} - \frac{\gamma_{n-2}^2}{2} - \epsilon_{n-1} \geq \gamma_{n-1}\sqrt{\alpha_{n-2}\beta_{n-2}}\gamma_{n-3} - \frac{\gamma_{n-2}^2}{2} - \epsilon_{n-1} \nonumber \\
    &\geq \alpha_{n-1}\beta_{n-1}\gamma_{n-2}^2. \label{eqn:Sbar_non_empty_2}
\end{align}

Now, $\vec{x}_* \in \bigcap_{j=1}^n\bigcap_{i=1}^{n-j+1} \ubar{S}_{ji}$ is equivalent to $\ubar{h}_{ji}(\vec{x}_*) \geq \frac{\epsilon_{i+j-1}^2}{\gamma_{i+j-1}^2}$ for all $j \in \{1, \dots, n\}$ and all $i \in \{1, \dots, n-j+1\}$, or equivalently, $\gamma_{i+j-1}\sqrt{\ubar{h}_{ji}(\vec{x}_*)} - \epsilon_{i+j-1} \geq 0$. As already shown in \eqref{eqn:Sbar_non_empty_1}-\eqref{eqn:Sbar_non_empty_2}, we have:
\begin{align}
    \gamma_1\sqrt{\ubar{h}_{11}(\vec{x}_*)} - \epsilon_1 &\geq \gamma_1\sqrt{\delta} - \epsilon_1 \geq \beta_1\gamma_1\sqrt{\delta} > 0, \\
    \gamma_2\sqrt{\ubar{h}_{12}(\vec{x}_*)} - \epsilon_2 &\geq \gamma_2\sqrt{\beta_1\gamma_1\sqrt{\delta}} - \epsilon_2 \geq \alpha_2\beta_2\gamma_1^2 + \frac{\gamma_1^2}{2} > 0, \\
    \gamma_i\sqrt{\ubar{h}_{1i}(\vec{x}_*)} - \epsilon_i &\geq \gamma_i\sqrt{\alpha_{i-1}\beta_{i-1}}\gamma_{i-2} - \epsilon_i \geq \alpha_i\beta_i\gamma_{i-1}^2 + \frac{\gamma_{i-1}^2}{2} > 0\ \forall i \in \{3, \dots, n-1\}, \\
    \gamma_j\sqrt{\ubar{h}_{j1}(\vec{x}_*)} - \epsilon_j &\geq \gamma_j\sqrt{-\ubar{x}_j} - \epsilon_j \geq \beta_j\gamma_j\sqrt{-\ubar{x}_j} > 0\ \forall j \in \{2, \dots, n\}, \\
    \gamma_{j+1}\sqrt{\ubar{h}_{j2}(\vec{x}_*)} - \epsilon_{j+1} &\geq \gamma_{j+1}\sqrt{\beta_j\gamma_j\sqrt{-\ubar{x}_j}} - \epsilon_{j+1} \geq \alpha_{j+1}\beta_{j+1}\gamma_j^2 + \frac{\gamma_j^2}{2} > 0\ \forall j \in \{2, \dots, n\}, \\
    \gamma_{i+j-1}\sqrt{\ubar{h}_{ji}(\vec{x}_*)} - \epsilon_{i+j-1} &\geq \gamma_{i+j-1}\sqrt{\alpha_{i+j-2}\beta_{i+j-2}}\gamma_{i+j-3} - \epsilon_{i+j-1} \geq \alpha_{i+j-1}\beta_{i+j-1}\gamma_{i+j-2}^2 + \frac{\gamma_{i+j-2}^2}{2} > 0 \nonumber \\
    &\indenti{\geq}\ \forall i \in \{3, \dots, n-j\},
\end{align}
Finally, to complete the proof, one can show using \eqref{eqn:gamma_1_lower_limit}-\eqref{eqn:gamma_n-1_upper_limit} in the same manner as above that:
\begin{align}
    \gamma_n\sqrt{\ubar{h}_{1n}(\vec{x}_*)} - \epsilon_n &\geq \gamma_n\sqrt{\alpha_{n-1}\beta_{n-1}}\gamma_{n-2} - \epsilon_n \geq \alpha_n\beta_n\gamma_{n-1}^2, \\
    \gamma_n\sqrt{\ubar{h}_{j(n-j+1)}(\vec{x}_*)} - \epsilon_n &\geq \gamma_n\sqrt{\alpha_{n-1}\beta_{n-1}}\gamma_{n-2} - \epsilon_n \geq \alpha_n\beta_n\gamma_{n-1}^2,
\end{align}
thus completing the proof. \hfill $\Box$
\section{Additional Simulation Results} \label{app:simulations}

In this appendix, we provide additional supporting simulation results demonstrating the efficacy of our approach. Figures \ref{fig:traj_2_top_view}-\ref{fig:traj_3_thrusts} show two cases where the reference trajectory leads to target locations outside of the safe set, starting from different initial conditions. In both cases, the plant simultaneously obeys input constraints and maneuvers itself to be as close to the reference trajectory as possible, ending as close to the target as possible, while remaining inside the safe set. Finally, Figures \ref{fig:traj_3_small_top_view}-\ref{fig:traj_3_small_thrusts} are a repeat of the simulation scenario in Figures \ref{fig:traj_3_top_view}-\ref{fig:traj_3_thrusts}, but with input limitations that are artificially made severely restrictive. As before, the plant obeys the input restrictions and keeps itself inside the safe set.

\begin{figure}[h]
    \centering
    \begin{subfigure}[t]{0.48\textwidth}
        \centering
        \includegraphics[width=\textwidth]{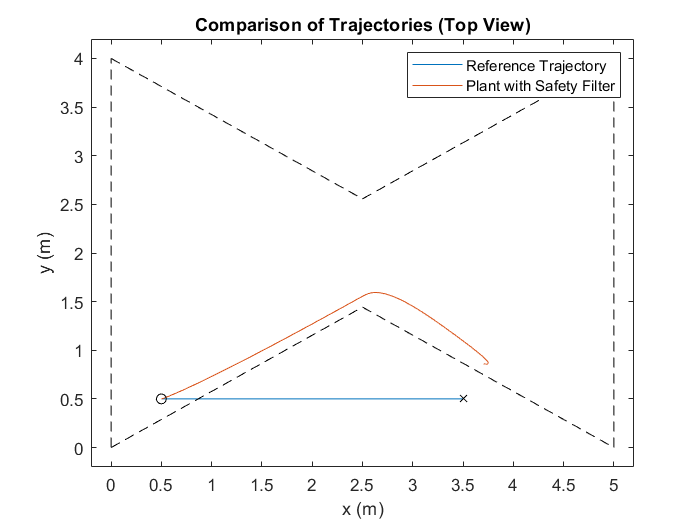}
        \caption{Top-down view of the plant's trajectory under the safety filter. The plant stays inside the safe set (dashed black lines) even though the target is unsafe.}
        \label{fig:traj_2_top_view}
    \end{subfigure}
    \hfill
    \begin{subfigure}[t]{0.48\textwidth}
        \centering
        \includegraphics[width=\textwidth]{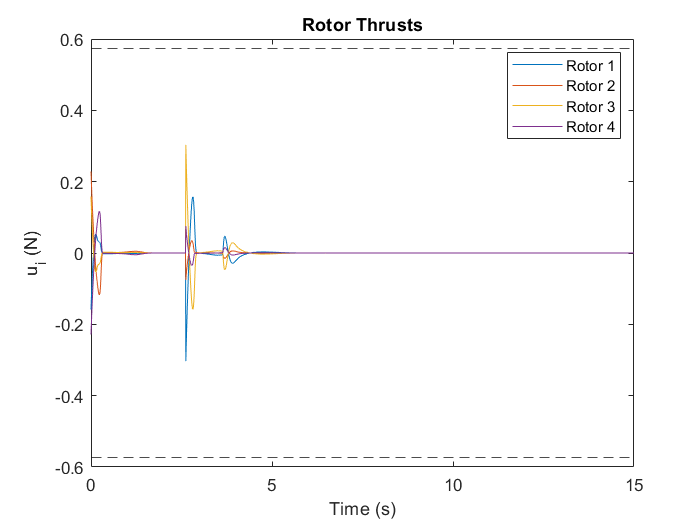}
        \caption{Individual rotor thrusts resulting from the safety filter. All rotors obey the input limits (dashed black lines).}
        \label{fig:traj_2_thrusts}
    \end{subfigure}
\end{figure}

\begin{figure}[h]
    \centering
    \begin{subfigure}[t]{0.48\textwidth}
        \centering
        \includegraphics[width=\textwidth]{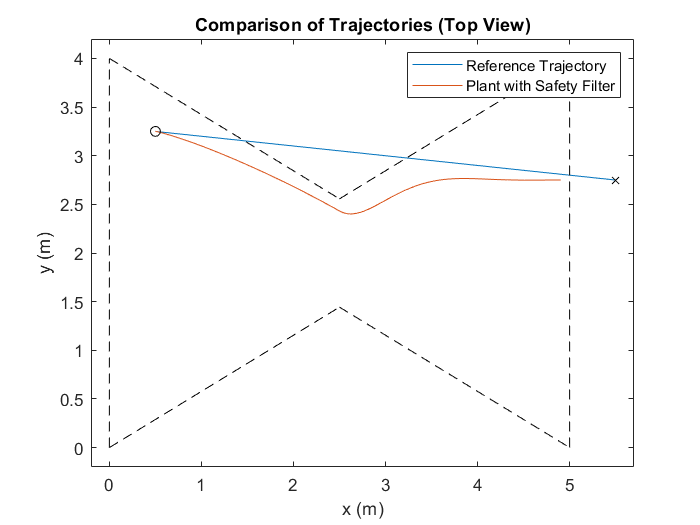}
        \caption{Top-down view of the plant's trajectory under the safety filter. The plant stays inside the safe set (dashed black lines) even though the target is unsafe.}
        \label{fig:traj_3_top_view}
    \end{subfigure}
    \hfill
    \begin{subfigure}[t]{0.48\textwidth}
        \centering
        \includegraphics[width=\textwidth]{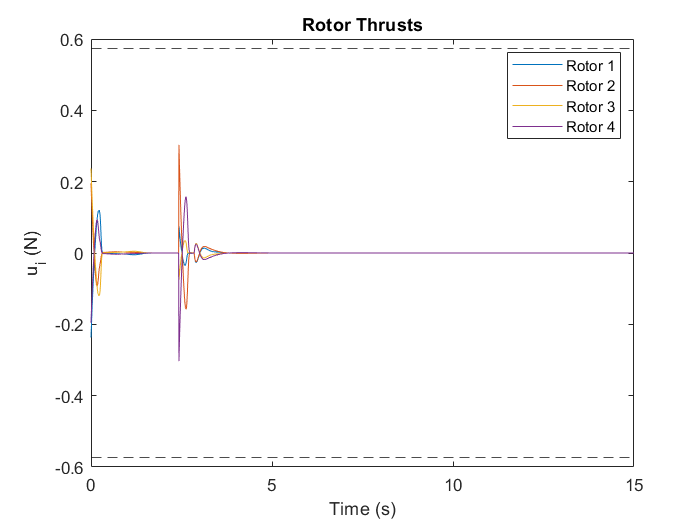}
        \caption{Individual rotor thrusts resulting from the safety filter. All rotors obey the input limits (dashed black lines).}
        \label{fig:traj_3_thrusts}
    \end{subfigure}
\end{figure}

\begin{figure}[h]
    \centering
    \begin{subfigure}[t]{0.48\textwidth}
        \centering
        \includegraphics[width=\textwidth]{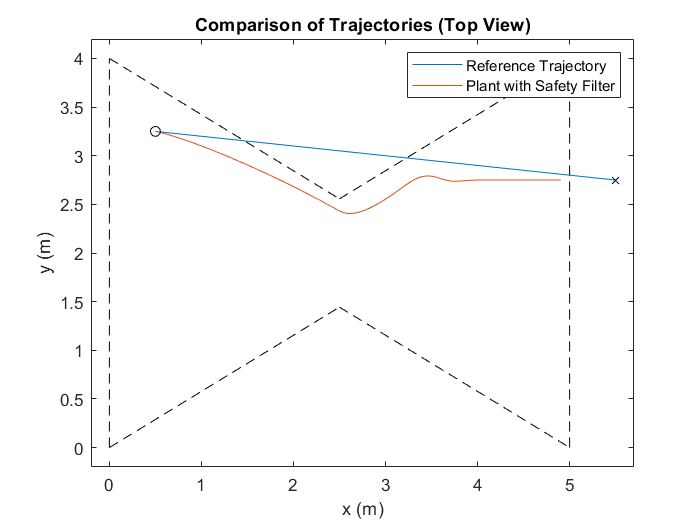}
        \caption{Top-down view of the plant's trajectory under the safety filter. The plant stays inside the safe set (dashed black lines) even though the target is unsafe.}
        \label{fig:traj_3_small_top_view}
    \end{subfigure}
    \hfill
    \begin{subfigure}[t]{0.48\textwidth}
        \centering
        \includegraphics[width=\textwidth]{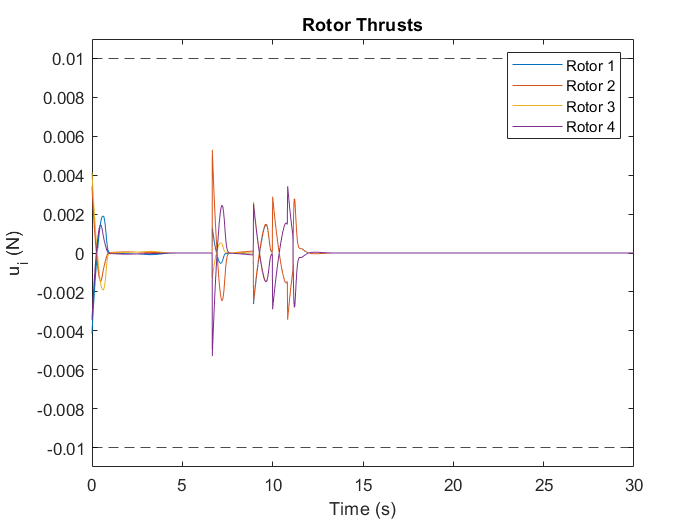}
        \caption{Individual rotor thrusts resulting from the safety filter. All rotors obey the (artificially severely restrictive) input limits (dashed black lines).}
        \label{fig:traj_3_small_thrusts}
    \end{subfigure}
\end{figure}

\end{document}